\documentclass[11pt]{amsart}
\usepackage{amsmath,amssymb,amsfonts,amsbsy}
\usepackage{graphicx}
\usepackage{float}
\usepackage{setspace}
\usepackage[margin=1.0in]{geometry}

\newtheorem{thm}{Theorem}[section]

\newtheorem{cor}[thm]{Corollary}
\newtheorem{prop}[thm]{Proposition}

\theoremstyle{definition}
\newtheorem{example}[thm]{Example}
\theoremstyle{definition}
\newtheorem{defn}[thm]{Definition}
\theoremstyle{definition}
\newtheorem{remark}[thm]{Remark}

\newcommand{\mc}[1]{\mathcal{#1}}
\newcommand{\e}[1]{\emph{#1}}

\newcommand{\la}{\langle}
\newcommand{\ra}{\rangle}
\newcommand{\tr}{\mathrm{tr}}

\newcommand{\rmv}[1]{}

\newcommand{\hs}{\hskip5pt}
\newcommand{\sk}{\hskip10pt}

\newcommand{\LG}{\mc{L}(G)}
\newcommand{\RG}{\mc{R}(G)}
\newcommand{\LO}{L^1(G)}

\newcommand{\LT}{L^2(G)}

\newcommand{\LI}{L^{\infty}(G)}

\newcommand{\BH}{\mc{B}(H)}
\newcommand{\Th}{\mc{T}(H)}
\newcommand{\BLT}{\mc{B}(L^2(G))}

\newcommand{\TC}{\mc{T}(L^2(G))}

\newcommand{\NCBLT}{\mc{CB}_{\mc{L}(G)}^{\sigma,L^{\infty}(G)}(\mc{B}(L^2(G)))}
\newcommand{\NCBLTD}{\mc{CB}^{\sigma,\mc{L}(G)}_{L^{\infty}(G)}(\mc{B}(L^2(G)))}

\newcommand{\Mcb}{M_{cb}A(G)}

\newcommand{\PG}{\mc{P}(G)}

\newcommand{\PO}{\mc{P}_1(G)}

\newcommand{\vphi}{\varphi}

\newcommand{\lm}{\lambda}

\newcommand{\om}{\omega}

\newcommand{\ten}{\otimes}

\newcommand{\id}{\iota}
\newcommand{\h}[1]{\hat{#1}}
\newcommand{\Hmu}{\mc{H}_\mu}
\newcommand{\Htmu}{\tilde{\mc{H}}_\mu}
\newcommand{\Hphi}{\mc{H}_\vphi}
\newcommand{\Htphi}{\tilde{\mc{H}}_\vphi}

\providecommand{\abs}[1]{\lvert#1\rvert}
\providecommand{\norm}[1]{\lVert#1\rVert}

\newcommand{\C}{\mathbb{C}}
\newcommand{\N}{\mathbb{N}}
\newcommand{\Z}{\mathbb{Z}}
\newcommand{\R}{\mathbb{R}}

\begin{document}

\title{Quantum Channels Arising from Abstract Harmonic Analysis}
\author{Jason Crann$^{1,2}$ and Matthias Neufang$^{1,2}$}
\address{$^1$School of Mathematics \& Statistics, Carleton University, Ottawa, ON, Canada K1S 5B6}
\address{$^2$Universit\'{e} Lille 1 - Sciences et Technologies, UFR de Math\'{e}matiques, Laboratoire de Math\'{e}matiques Paul Painlev\'{e} - UMR CNRS 8524, 59655 Villeneuve d'Ascq C\'{e}dex, France}

\begin{abstract} We present a new application of harmonic analysis to quantum information by constructing intriguing classes of quantum channels stemming from specific representations of multiplier algebras over locally compact groups $G$. Beginning with a representation of the measure algebra $M(G)$, we unify and elaborate on recent counter-examples to fixed point subalgebras in infinite dimensions, as well as present an application to the noiseless subsystems method of quantum error correction. Using a representation of the completely bounded Fourier multiplier algebra $\Mcb$, we provide a new class of counter-examples to the recently solved asymptotic quantum Birkhoff conjecture, along with a systematic method of producing the examples using a geometric representation of Schur maps. Further properties of our channels including duality, quantum capacity, and entanglement preservation are discussed along with potential applications to additivity conjectures.\end{abstract}

\maketitle

\begin{spacing}{1.0}

\section{Introduction}

If $G$ is a locally compact group\let\thefootnote\relax\footnotetext{2010 \e{Mathematics Subject Classification} Primary: 43A10, 43A35, 81R05, 81R15; Secondary: 22D15, 46L89.\\[1ex] \e{Keywords}: locally compact groups, multiplier algebras; quantum channels, fixed points, minimum output entropy.\\[1ex] This work was completed as part of the Master's thesis of the first author, who was supported by an NSERC Canada Graduate Scholarship. The second author was partially supported by an NSERC Discovery Grant.}, then every measure $\mu\in M(G)$ defines a normal completely bounded map on $\BLT$ by
\begin{equation}\label{M(G)}\Theta(\mu)(x)=\int_G r_sxr_{s^{-1}}d\mu(s),\end{equation}
where $r$ is the right regular representation of $G$, and the integral converges in the weak* topology of $\BLT$. From the perspective of abstract harmonic analysis this representation has been intensely studied \cite{G,St,N}. In particular, it was shown by the second author that $\Theta$ determines a completely isometric isomorphism of $M(G)$ onto $\NCBLT$, the algebra of normal completely bounded $\LG$-bimodule maps on $\BLT$ that leave $\LI$ globally invariant \cite{N}. On the other hand, from the perspective of quantum information theory, if $\mu$ is a probability measure on $G$, then $\Theta(\mu)$ is completely positive and unital, that is, a quantum channel on $\BLT$.

In the dual setting, a representation of $\Mcb$, the completely bounded multipliers of the Fourier algebra has also been extensively studied \cite{NRS,S}. It was shown in \cite{NRS} that the resulting representation $\h{\Theta}$ defines a completely isometric isomorphism of $\Mcb$ onto $\NCBLTD$, the algebra of normal completely bounded $\LI$-bimodule maps on $\BLT$ that leave $\LG$ invariant. Moreover, if $\vphi$ is a positive definite function of norm one, $\h{\Theta}(\vphi)$ is a quantum channel on the same space $\BLT$.

The aim of this paper is to study these dual classes of quantum channels in connection with various concepts of importance in quantum information theory. We begin in section 2 with a review of the important notions from abstract harmonic analysis and infinite-dimensional quantum channels. Section 3 is dedicated to the representation of the measure algebra (\ref{M(G)}). Using results on the fixed points of the resulting channels \cite{JN}, we unify and elaborate on a class of counter-examples to a recently solved conjecture on fixed point subalgebras in infinite dimensions \cite{AGG,Lim}. We also discuss applications to the noiseless subsystems method of quantum error correction \cite{KLV}.

In section 4 we present a detailed account of the representation of $\Mcb$. With the aid of a geometric representation of Schur maps \cite{HS}, we present a systematic procedure to generate new classes of counter-examples to the recently solved asymptotic quantum Birkhoff conjecture \cite{HM}. Further properties and the duality of the channels $\Theta(\mu)$ and $\h{\Theta}(\vphi)$ are analyzed in section 5. By composing the channels we obtain a generalization of the Weyl-covariant channels studied in \cite{A,DFH}. We discuss entanglement preservation, quantum capacity, and potential applications to additivity conjectures.

\section{Preliminaries}

We assume the reader is familiar with the basics of operator algebras and operator space theory. Our references are \cite{Ren} and \cite{ER}, respectively, and we adopt their notation. In particular, we denote the set of normal--i.e., weak*-weak* continuous--completely bounded (respectively, completely positive) maps on $\BH$ by $\mc{CB}^{\sigma}(\BH)$ (respectively, $\mc{CP}^{\sigma}(\BH)$). If $M$ is a von Neumann subalgebra of $\BH$, we denote the set of normal completely bounded $M$-bimodule maps by $\mc{CB}^{\sigma}_M(\BH)$, where the latter property means that $\Phi(axb)=a\Phi(x)b$ for all $x\in\BH$, $a,b\in M$. If $\xi,\eta\in H$, we denote by $x_{\xi,\eta}$ the rank one operator $x_{\xi,\eta}(\zeta)=\la\zeta,\eta\ra\xi$, $\zeta\in H$, and by $\om_{\xi,\eta}$ the normal linear functional $\om_{\xi,\eta}(x)=\la x\xi,\eta\ra$, $x\in\BH$. The duality between the space $\Th$ of trace class operators on $H$ and $\BH$ is denoted by $\la x,\rho\ra=\tr(x\rho)$, $x\in\BH$, $\rho\in\Th$. If $M,N\subseteq\BH$, the von Neumann subalgebra of $\BH$ generated by $M$ and $N$ will be denoted $M\vee N$. The Hilbert space tensor product will be denoted $\ten_2$.

Throughout this section $G$ will denote a locally compact group with fixed left Haar measure $ds$. All Lebesgue spaces $L^p(G)$, $p\in[1,\infty]$, considered in this paper will be with respect to this fixed Haar measure, and functions $f\in\LI$ will be identified with their corresponding multiplication operator $M_f\in\BLT$. A bounded linear functional $m:\LI\rightarrow\C$ is called a \e{left invariant mean} if
\begin{equation*}\la m,1\ra=\norm{m}=1\hs\hs\text{and}\hs\hs m(l_sf)=m(f)\end{equation*}
for all $s\in G$, $f\in\LI$ where $l_sf(t)=f(s^{-1}t)$, $t\in G$. The group $G$ is said to be \e{amenable} if there is a left invariant mean on $\LI$.

A \e{unitary representation} of $G$ is a homomorphism $\pi:G\rightarrow\mc{U}(H)$, the group of unitary operators on some Hilbert space $H$, that is continuous with respect to the strong operator topology. Two important examples of unitary representations are the \e{left and right regular representations}, given respectively by $l:G\rightarrow\BLT$ and $r:G\rightarrow\BLT$, where
\begin{equation*}l_s\xi(t)=\xi(s^{-1}t)\hs\hs\text{and}\hs\hs r_s\xi(t)=\Delta(s)^{1/2}\xi(ts)\end{equation*}
for $\xi\in\LT$, $s,t\in G$, where $\Delta:G\rightarrow(0,\infty)$ is the group modular function \cite[\S2.4]{F}. It follows that
\begin{equation}\label{gvNalg}\LG=\overline{\text{span}\{l_s : s\in G\}}^{SOT}\hs\hs\text{and}\hs\hs\RG=\overline{\text{span}\{r_s : s\in G\}}^{SOT}\end{equation}
are von Neumann algebras in $\BLT$, called the \e{left and right group von Neumann algebras}, respectively. Furthermore, they satisfy the following commutation relations:
\begin{equation}\label{commute}\LG'=\RG\hs\hs\text{and}\hs\hs\RG'=\LG.\end{equation}
In relation to $\LI$, we have the following result, known as Heisenberg's theorem. We provide a simple proof for the convenience of the reader.
\begin{prop} Let $G$ be a locally compact group. Then
\begin{equation*}\LI\cap\LG=\C 1=\LI\cap\RG.\end{equation*}
Equivalently, $\LI\vee\LG=\BLT=\LI\vee\RG$.\end{prop}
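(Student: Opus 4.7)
The two formulations are equivalent via the double commutant theorem. Since $L^\infty(G)$, acting by multiplication on $L^2(G)$, is a maximal abelian subalgebra of $\BLT$, we have $L^\infty(G)' = L^\infty(G)$, and combining this with the commutation relations (\ref{commute}) yields
\[
(\LI \vee \LG)' = \LI \cap \RG \qquad \text{and} \qquad (\LI \vee \RG)' = \LI \cap \LG.
\]
Thus the pair of intersection identities holds iff the pair of join identities does, and since the intersection identities are symmetric in $\LG$ and $\RG$, it suffices to prove $\LI \cap \LG = \C 1$.

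Suppose then that $f \in \LI$ with $M_f \in \LG = \RG'$, so that $M_f$ commutes with every $r_s$. A direct computation from the formula $(r_s\xi)(t) = \Delta(s)^{1/2}\xi(ts)$ gives
\[
r_s M_f r_{s^{-1}} = M_{R_s f}, \qquad (R_s f)(t) := f(ts),
\]
the factors $\Delta(s)^{1/2}$ and $\Delta(s^{-1})^{1/2}$ cancelling. The commutation therefore forces $R_s f = f$ in $\LI$ for every $s \in G$; equivalently, for each fixed $s$, $f(ts) = f(t)$ for almost every $t$.

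It remains to show that such an essentially right-invariant $L^\infty$-function is constant a.e. Applying Fubini's theorem to the jointly measurable function $(s,t) \mapsto |f(ts) - f(t)|$ on $G \times G$ (passing to a $\sigma$-compact open subgroup to secure $\sigma$-finiteness if necessary), all of whose $s$-slices are null, one obtains a $t_0 \in G$ with $f(t_0 s) = f(t_0) =: c$ for almost every $s \in G$. Since left translation by $t_0$ preserves Haar measure, this forces $f = c$ a.e., so $M_f \in \C 1$. The identity $\LI \cap \RG = \C 1$ is then proved by the symmetric argument with $l_s$ in place of $r_s$, producing left- rather than right-invariance. I expect the only genuine technical wrinkle to be the Fubini step, which must be phrased carefully when $G$ is not $\sigma$-compact; everything else is a routine unwinding of definitions and the double commutant theorem.
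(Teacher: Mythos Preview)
Your proposal is correct and follows essentially the same route as the paper: both arguments show that a multiplication operator lying in the commutant of one regular representation must correspond to a translation-invariant $L^\infty$-function, hence to a constant. The paper carries this out for $\LI \cap \RG$ via the identity $l_s M_f l_{s^{-1}} = M_{l_s f}$ and simply asserts that left-invariance forces $f$ to be constant, whereas you work on the symmetric side $\LI \cap \LG$ and supply the Fubini argument the paper omits; your caution about $\sigma$-compactness is warranted but does not affect the overall strategy.
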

\begin{proof} If $f\in\LI$ and $s\in G$, then $l_sM_f=M_{l_sf}l_s$. Thus, if $M_f\in\LI\cap\RG$ then
\begin{equation*}M_{l_sf}=l_sM_fl_{s^{-1}}=M_f\end{equation*}
for all $s\in G$, which implies that $f$ is a constant function. Hence, $\LI\cap\RG=\C 1$. Taking the commutant, we obtain
\begin{equation*}\BLT=(\C 1)'=(\LI\cap\RG)'=\LI\vee\LG.\end{equation*}\end{proof}

By lifting the left regular representation to $l:\LO\rightarrow\BLT$ via:
\begin{equation*}\la l(f),\om\ra=\int_{G}f(s)\la l_s,\om\ra ds,\sk f\in\LO,\hs\om\in\TC,\end{equation*}
it follows that
\begin{equation}\label{C^*red}\LG=\overline{\text{span}\{l(f) : f\in\LO\}}^{SOT}.\end{equation}
If instead we take the norm closure in equation (\ref{C^*red}), we obtain a $C^*$-algebra, denoted $C^*_{l}(G)$, known as the \e{left reduced $C^*$-algebra} of $G$.

A one-dimensional unitary representation $\chi:G\rightarrow\C$ is called a \e{character} of $G$. Then $\chi(G)$ is contained in the unit circle so that $\chi\in\LI$, and when $G$ is abelian, the set of characters of $G$, denoted $\h{G}$, forms a locally compact (abelian) group under multiplication when given the weak* topology of $\LI$ \cite[\S4.1]{F}, called the \e{dual} of $G$. The famous Pontrjagin duality theorem then states that $G$ is canonically isomorphic to $\h{\h{G}}$ via evaluation \cite[Theorem 4.31]{F}. One of the main tools in this theory, which allows one to transfer between $G$ and its dual, is the Fourier transform $\mc{F}:\LO\rightarrow C_0(\hat{G})$, given by
\begin{equation}\label{Fourier}\mc{F}(f)(\chi)=\int_{G}\overline{\chi(s)}f(s)ds,\sk f\in\LO,\hs\chi\in\h{G}.\end{equation}
By the Plancherel theorem, $\mc{F}$ gives rise to a unitary isomorphism $\LT\rightarrow L^2(\h{G})$ \cite[Theorem 4.25]{F}.

Given an arbitrary locally compact group, a function $\vphi:G\rightarrow\C$ is \e{positive definite} if
\begin{equation*}\sum_{i,j=1}^n\alpha_i\overline{\alpha_j}\vphi(s_j^{-1}s_i)\geq0\end{equation*}
for all $\alpha_1,..., \alpha_n\in\C$, $s_1,..., s_n\in G$ and $n\in\N$. Note that by taking $n=2$, $s_1=s$ and $s_2=e$, this condition implies that the matrix
\begin{equation*}\begin{pmatrix} \vphi(e) & \vphi(s)\\ \vphi(s^{-1}) & \vphi(e) \end{pmatrix}\end{equation*}
is positive semi-definite. Hence, $\vphi(s^{-1})=\overline{\vphi(s)}$ and $\vphi(e)^2-\vphi(s)\vphi(s^{-1})\geq 0$, so that $\abs{\vphi(s)}\leq\vphi(e)$ for all $s\in G$. In particular, positive definite functions are bounded. The set of continuous positive definite functions (respectively, of norm one) will be denoted $\PG$ (respectively, $\PO$). There is an intimate connection between elements of $\PG$ and unitary representations of $G$, namely $\vphi\in\PG$ if and only if there exists a unitary representation $\pi:G\rightarrow\BH$ and a cyclic vector $\xi\in H$ such that $\vphi(s)=\la\pi(s)\xi,\xi\ra$, $s\in G$ \cite[\S3.3]{F}. Note that when the Hilbert space dimension $d_\pi$ equals 1 (and $\norm{\xi}=1$) we recover the characters of $G$.

As usual, a \e{quantum state} on a Hilbert space $H$ is a positive trace class operator of trace one. A state on $H\ten_2 H$ is \e{separable} if it lies in the closed convex hull of product states $\rho\ten\om$, with $\rho,\om\in\Th$. In the Schr\"{o}dinger picture of dynamics, states evolve under the action of completely positive trace preserving maps $\Psi:\Th\rightarrow\Th$, while in the Heisenberg picture, states are fixed and observables evolve under normal completely positive unital maps $\Phi:\BH\rightarrow\BH$. It is known that these two pictures are equivalent, so we shall take the Heisenberg approach, making a \e{quantum channel} a normal unital completely positive map on $\BH$. In this setting, we have the following representation theorem due to Haagerup (see also \cite[Theorem 4.2]{BS}): if $\Phi\in\mc{CP}^{\sigma}(\BH)$ then there is a net $(a_i)_{i\in I}$ in $\BH$ such that $\sum_{i\in I}a_ia_i^*$ converges weak* in $\BH$ and
\begin{equation}\label{Haa}\Phi(x)=w^*-\sum_{i\in I}a_ixa_i^*,\sk x\in\BH,\end{equation}
and, conversely, every net $(a_i)_{i\in I}$ in $\BH$ such that $\sum_{i\in I}a_ia_i^*$ converges weak* in $\BH$ determines a map $\Phi\in\mc{CP}^{\sigma}(\BH)$ via equation (\ref{Haa}). For the latter statement, the convergence of (\ref{Haa}) can be seen by the Cauchy-Schwartz inequality as follows. Take $x\in\BH$ and $\om\in\Th$, $\om\geq0$. Then
\begin{equation*}\sum_{i\in I}\abs{\la a_ixa_i^*,\om\ra}\leq\sum_{i\in I}\la a_ia_i^*,\om\ra^{\frac{1}{2}}\la a_ixx^*a_i^*,\om\ra^{\frac{1}{2}}\leq\norm{x}\sum_{i\in I}\la a_ia_i^*,\om\ra<\infty.\end{equation*}
By the Jordan decomposition of trace class operators, it follows that the series (\ref{Haa}) converges weak* in $\BH$. To obtain the normality of the resulting map $\Phi$, one can use the well-known decomposition of trace class operators into a limit of rank-one operators \cite[Proposition 1.2.2]{Ren}. For details, see the proof of \cite[Lemma 2.4(i)]{BS}. The operators $(a_i)_{i\in I}$ are called the \e{Kraus operators} of $\Phi$. We say that $\Phi$ is \e{trace preserving} if $\Phi|_{\Th}:\Th\rightarrow\Th$ and $\tr(\Phi(\rho))=\tr(\rho)$ for all $\rho\in\Th$, and \e{bistochastic} if it is unital and trace preserving. In finite dimensions, our definition coincides with the usual notion of bistochasticity, which is equivalent to $\sum_{i=1}^na_ia_i^*=\sum_{i=1}^na_i^*a_i=1$. In fact, a similar relation holds in infinite dimensions, as we shall now see.

\begin{thm}\label{TP} Let $H$ be a Hilbert space, and let $\Phi\in\mc{CP}^{\sigma}(\BH)$ with Kraus operators $(a_i)_{i\in I}$. Then $\Phi$ is bistochastic if and only if $w^*-\sum_{i\in I}a_i^*a_i=w^*-\sum_{i\in I}a_ia_i^*=1$.\end{thm}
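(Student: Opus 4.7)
The plan is to handle unitality and trace preservation separately, leveraging the Kraus expansion (\ref{Haa}) and the fact that $\Phi\in\mc{CP}^{\sigma}(\BH)$ already guarantees weak*-convergence of $\sum_i a_ia_i^*$. Unitality is immediate: evaluating (\ref{Haa}) at $x=1$ gives $\Phi(1)=w^*-\sum_{i\in I}a_ia_i^*$, so $\Phi(1)=1$ is equivalent to $w^*-\sum_i a_ia_i^*=1$. The substantive content is therefore the equivalence of trace preservation with $w^*-\sum_i a_i^*a_i=1$, which I would prove as two implications.

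For sufficiency, assume $w^*-\sum_i a_i^*a_i=1$ and take $\rho\in\Th$ with $\rho\geq 0$. The partial sums $T_F:=\sum_{i\in F}a_i\rho a_i^*$ form a monotone increasing net in $\Th^+$. Cyclicity of the trace gives $\tr(T_F)=\sum_{i\in F}\tr(a_i^*a_i\rho)=\la\sum_{i\in F}a_i^*a_i,\rho\ra$, and since every finite partial sum satisfies $\sum_{i\in F}a_i^*a_i\leq 1$, we have $\tr(T_F)\leq\tr(\rho)$; the weak*-hypothesis applied to $\rho\in\Th$ forces $\tr(T_F)\to\tr(\rho)$. A monotone increasing net in $\Th^+$ with bounded trace is Cauchy in trace norm (since $\norm{T_{F'}-T_F}_1=\tr(T_{F'})-\tr(T_F)$ for $F\subseteq F'$), so it converges in trace norm to some $T\in\Th^+$ with $\tr(T)=\tr(\rho)$. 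Trace-norm convergence implies weak*-convergence, and uniqueness of the weak*-limit identifies $T=\Phi(\rho)\in\Th$. The case of general $\rho\in\Th$ follows by Jordan decomposition.

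For necessity, assume $\Phi|_\Th:\Th\rightarrow\Th$ preserves trace, fix $\xi\in H$, and set $\rho=x_{\xi,\xi}$, so $\tr(\rho)=\norm{\xi}^2$ and $a_i\rho a_i^*=x_{a_i\xi,a_i\xi}$. The partial sums $T_F$ are monotone in $\Th^+$ with weak*-limit $\Phi(\rho)\in\Th$ by hypothesis. Since $T_F\leq\Phi(\rho)$ in the operator order (the difference being a weak*-limit of positive operators), $\tr(T_F)\leq\tr(\Phi(\rho))=\norm{\xi}^2$; conversely, lower semicontinuity of the trace on the positive cone of $\BH$ with respect to the weak*-topology yields $\tr(\Phi(\rho))\leq\liminf_F\tr(T_F)$. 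Combining these with $\tr(T_F)=\sum_{i\in F}\la a_i^*a_i\xi,\xi\ra$ gives $\sum_i\la a_i^*a_i\xi,\xi\ra=\norm{\xi}^2$ for every $\xi\in H$. Thus the monotone net $(\sum_{i\in F}a_i^*a_i)_F$ is operator-bounded above by $1$, converges SOT to some $0\leq S\leq 1$, and polarisation of the preceding scalar identity forces $S=1$; bounded SOT convergence promotes to weak*-convergence, so $w^*-\sum_i a_i^*a_i=1$.

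The main obstacle is topological bookkeeping on monotone nets of positive trace-class operators: one needs the standard facts that a trace-bounded monotone increasing net in $\Th^+$ converges in trace norm, and that the trace is lower semicontinuous on the positive cone of $\BH$ with respect to the weak*-topology. Once these are isolated, everything else reduces to unwinding the Kraus expansion.
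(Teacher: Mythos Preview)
Your proof is correct. Both your argument and the paper's share the same skeleton—reduce to the equivalence of trace preservation with $w^*-\sum_i a_i^*a_i=1$, and handle each direction via monotonicity of the partial sums—but the technical routes differ. For sufficiency, the paper introduces the ``dual'' channel $\tilde\Phi(x)=\sum_i a_i^*xa_i$, which is normal by Haagerup's theorem once $\sum_i a_i^*a_i$ converges weak*, and then computes $\tr(\Phi(\rho))=\langle\rho,\tilde\Phi(1)\rangle=\tr(\rho)$ via an orthonormal-basis expansion; you instead argue directly that the trace-bounded monotone net $(T_F)$ is trace-norm Cauchy and identify its limit with $\Phi(\rho)$ by uniqueness of weak* limits. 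For necessity, the paper works with an arbitrary positive $\rho\in\Th$, uses that bounded monotone nets converge strongly (hence weak*) to their supremum, and invokes normality of the trace on increasing nets to get $\tr(a_F)\to\tr(\Phi(\rho))$, finishing by Jordan decomposition; you specialise to rank-one $\rho=x_{\xi,\xi}$, use weak* lower semicontinuity of the trace together with $T_F\leq\Phi(\rho)$ to pin down $\sum_i\langle a_i^*a_i\xi,\xi\rangle=\norm{\xi}^2$, and then polarise. Your approach is arguably more self-contained (it isolates exactly the two order-theoretic facts needed and avoids constructing $\tilde\Phi$), while the paper's dual-channel trick is structurally cleaner and makes the symmetry between $\Phi$ and $\tilde\Phi$ explicit.
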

\begin{proof} As $\Phi(x)=\sum_{i\in I}a_ixa_i^*$, it suffices to show that $w^*-\sum_{i\in I}a_i^*a_i=1$ is equivalent to trace preservation. Suppose the previous sum converges to the identity. Then the map $\tilde{\Phi}:\BH\rightarrow\BH$ given by
\begin{equation*}\tilde{\Phi}(x)=\sum_{i\in I}a_i^*xa_i,\sk x\in\BH,\end{equation*}
lies in $\mc{CP}^{\sigma}(\BH)$ by Haagerup's theorem. Letting $(e_j)_{j\in J}$ be an orthonormal basis for $H$, we have $w^*-\sum_{j\in J}x_{e_j,e_j}=1$, so for $\rho\in\Th$,
\begin{align*}\tr(\Phi(\rho))&=\sum_{j\in J}\la\Phi(\rho),x_{e_j,e_j}\ra=\sum_{j\in J}\sum_{i\in I}\la a_i\rho a_i^*,x_{e_j,e_j}\ra=\sum_{j\in J}\sum_{i\in I}\la\rho,a_i^*x_{e_j,e_j}a_i\ra=\sum_{j\in J}\la\rho,\tilde{\Phi}(x_{e_j,e_j})\ra\\
                                   &=\la\rho,\tilde{\Phi}(1)\ra=\la\rho,1\ra=\tr(\rho).\end{align*}
Conversely, suppose $\Phi$ is trace preserving, and fix $\rho\in\Th$, $\rho\geq0$. Indexing by finite subsets $F$ of $I$, $a_F:=\sum_{i\in F}a_i\rho a_i^*$ defines an increasing net of positive operators satisfying $\norm{a_F}\leq\norm{\Phi(\rho)}$ for all $F$. Hence, $a_F$ converges strongly to its supremum \cite[Proposition 1.2.10]{Ren}. Since the strong operator topology is equivalent to the weak* topology on bounded convex subsets of $\BH$ \cite[Theorem 1.2.9]{Ren}, the strong operator convergence of $a_F$ implies weak* convergence. Consequently, $\sup_F a_F=\Phi(\rho)$. As $\Phi(\rho)\in\Th$, and the trace is continuous with respect to supremums of increasing positive operators, $\tr(\Phi(\rho))=\lim_F\tr(a_F)$. Thus, by trace preservation of $\Phi$,
\begin{equation*}\lim_{F}\sum_{i\in F}\la a_i^*a_i-1,\rho\ra=\lim_{F}\sum_{i\in F}\tr(a_i\rho a_i^*)-\tr(\rho)=\lim_{F}\tr(a_F)-\tr(\Phi(\rho))=0.\end{equation*}
Appealing to the Jordan decomposition of normal linear functionals on a von Neumann algebra \cite[Theorem 1.9.8]{Ren}, we obtain the result.\end{proof}

\begin{remark} The literature on infinite-dimensional quantum channels (\cite{AGG,Lim}, for instance) contains a slightly different notion of bistochasticity, namely where the operator sums $S_1=\sum_{i\in I}a_ia_i^*$ and $S_2=\sum_{i\in I}a_i^*a_i$ both converge to the identity in the strong operator topology. However, this is equivalent to our definition as the strong operator topology coincides with the weak* topology on bounded convex sets, and, hence, Theorem \ref{TP} provides the desired equivalence with trace preservation.\end{remark}

\section{Representation of $M(G)$}

The measure algebra $M(G)$ of a locally compact group $G$ is the Banach algebra of finite regular Borel measures on $G$ under the convolution product, given by
\begin{equation*}\mu\ast\nu(f)=\int_G\int_Gf(st)d\mu(s)d\nu(t),\end{equation*}
for $\mu,\nu\in M(G)$ and $f\in C_0(G)$. Recall that the representation (\ref{M(G)}) of the measure algebra is given by
\begin{equation*}\la\Theta(\mu)(x),\rho\ra=\int_G\la r_sxr_{s^{-1}},\rho\ra d\mu(s),\end{equation*}
for $\mu\in M(G)$, $x\in\BLT$, and $\rho\in\TC$. By the commutation relation (\ref{commute}), $\Theta(\mu)$ is an $\LG$-bimodule map. Moreover, it leaves $\LI$ globally invariant, and one may verify that $\Theta(\mu)(M_f)=M_{\mu\ast f}$, where
\begin{equation}\label{fdotmu}\mu\ast f(s)=\int_Gf(st)d\mu(t),\sk f\in\LI,\hs s\in G.\end{equation}
Conversely, every such map on $\BLT$ is of this form, and it follows that $\Theta$ defines a completely isometric isomorphism of $M(G)$ onto $\NCBLT$, the algebra of normal completely bounded $\LG$-bimodule maps on $\BLT$ which leave $\LI$ globally invariant.

If $G$ is a locally compact group, a bounded function $f$ on $G$ is said to be \e{left uniformly continuous} if for every $\varepsilon>0$, there exists a neighborhood $U$ of the identity such that $\norm{_sf-f}_{\infty}<\varepsilon$ for all $s\in U$, where $_sf(t)=f(st)$, $t\in G$. The set of such functions is denoted $LUC(G)$. We say that a family of functions $(f_\alpha)_{\alpha\in A}$ in $LUC(G)$ is \e{equi-left uniformly continuous} if for every $\varepsilon>0$, there exists a neighborhood $U$ of the identity such that $\norm{_sf_\alpha-f_\alpha}_{\infty}<\varepsilon$ for all $s\in U$, $\alpha\in A$. In what follows the set of probability measures in $M(G)$ will be denoted $M_1(G)$.

\begin{prop} Let $G$ be a locally compact group. If $\mu\in M_1(G)$, then $\Theta(\mu)$ is a bistochastic quantum channel on $\BLT$.\end{prop}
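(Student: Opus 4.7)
The plan is to verify the four defining properties of a bistochastic quantum channel: normality, unitality, complete positivity, and trace preservation. Normality (in fact complete boundedness) is already granted since $\Theta(\mu)\in\NCBLT$, so only the remaining three properties need to be established. Unitality is immediate from the definition: $\Theta(\mu)(1)=\int_G r_sr_{s^{-1}}d\mu(s)=\mu(G)\cdot 1=1$, using that $\mu$ is a probability measure.

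For complete positivity I would amplify and work matricially. With $R_s:=I_n\otimes r_s$ unitary on $(\LT)^n$, the formula $\Theta(\mu)^{(n)}([x_{ij}])=\int_G R_s[x_{ij}]R_s^* d\mu(s)$ holds in the weak* sense in $M_n(\BLT)$. Pairing against any positive $[\rho_{ij}]\in M_n(\TC)^+$ produces a scalar integral whose integrand $\la R_s[x_{ij}]R_s^*,[\rho_{ij}]\ra$ is pointwise nonnegative when $[x_{ij}]\geq 0$, since unitary conjugation preserves positivity. Because $\mu\geq 0$, the integral is nonnegative, yielding complete positivity of $\Theta(\mu)$.

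The substantive part is trace preservation, which also entails showing that $\Theta(\mu)|_{\TC}$ actually lands in $\TC$. For each $\rho\in\TC$, I would establish that the orbit map $\alpha: s\mapsto r_s\rho r_{s^{-1}}$ from $G$ to $\TC$ is $\norm{\cdot}_1$-continuous: this holds for rank-one operators $x_{\xi,\eta}$ directly from the SOT-continuity of the right regular representation (together with the fact that rank-one norms in trace class reduce to Hilbert space norms of $\xi,\eta$), and extends to arbitrary $\rho$ by linearity and trace-norm density of finite-rank operators. Combined with the uniform bound $\norm{\alpha(s)}_1=\norm{\rho}_1$, the Bochner integral $B:=\int_G r_s\rho r_{s^{-1}}d\mu(s)$ then exists in $\TC$. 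For any $\om\in\BLT$, the fact that a bounded linear functional commutes with Bochner integration gives $\la\om,B\ra=\int_G\la\om,r_s\rho r_{s^{-1}}\ra d\mu(s)=\la\Theta(\mu)(\rho),\om\ra$ (where the right-hand side uses the $\BLT$-$\TC$ duality), so $\Theta(\mu)(\rho)=B\in\TC$. Applying the bounded functional $\tr$ then yields $\tr(\Theta(\mu)(\rho))=\int_G\tr(r_s\rho r_{s^{-1}})d\mu(s)=\tr(\rho)\cdot\mu(G)=\tr(\rho)$.

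The main obstacle, and the only step requiring genuine care, is precisely this passage from the weak* integral defining $\Theta(\mu)(\rho)$ in $\BLT$ to an honest Bochner integral in $\TC$; the technical crux is the $\norm{\cdot}_1$-continuity of the orbit map. Everything else reduces to standard Bochner integral arithmetic. An alternative, less conceptual route would be to verify the Kraus-sum conditions in \thmref{TP} directly for finitely supported $\mu=\sum_i c_i\delta_{s_i}$, where $\{\sqrt{c_i}\,r_{s_i}\}$ serve as Kraus operators and both sums collapse to $\sum_i c_i\cdot 1=1$, and then pass to general $\mu\in M_1(G)$ by a $w^*$-density argument combined with the complete isometry of $\Theta$.
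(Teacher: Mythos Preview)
Your proposal is correct and takes a genuinely different route from the paper's. Both arguments reduce trace preservation to the identity $\tr(\Theta(\mu)(\rho))=\int_G\tr(r_s\rho r_{s^{-1}})\,d\mu(s)$, but justify this interchange differently. You establish it via Bochner integrability in $\TC$: trace-norm continuity of the orbit map (together with inner regularity of $\mu$, which makes the map essentially separably valued) yields a Bochner integral in $\TC$ that agrees with the weak* integral defining $\Theta(\mu)(\rho)$, and then $\tr$ passes inside as a bounded linear functional. The paper instead expands the trace along an orthonormal basis, producing a net of functions $f_F(s)=\sum_{i\in F}\la r_s\rho r_{s^{-1}},x_{e_i,e_i}\ra$, shows this net is equi-left-uniformly-continuous and converges pointwise, hence uniformly on compacta, and then invokes regularity of $\mu$ to swap limit and integral. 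Your approach is more conceptual and delivers $\Theta(\mu)(\TC)\subseteq\TC$ as an immediate byproduct; the paper's approach avoids the Bochner machinery but trades it for a hands-on equicontinuity estimate. One small correction: in your identification step the test functionals $\om$ should range over $\TC$ (the predual of $\BLT$), not over all of $\BLT$, since a priori $\Theta(\mu)(\rho)$ is only an element of $\BLT$ and the pairing $\la\Theta(\mu)(\rho),\om\ra$ is only defined for $\om\in\TC$; this suffices to conclude $\Theta(\mu)(\rho)=B$. Your alternative sketch via weak*-approximation by finitely supported measures is less robust, since weak* convergence of $\mu_n\to\mu$ does not obviously pass trace preservation to the limit.
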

\begin{proof} Clearly $\Theta(\mu)$ is completely positive and unital, so it remains to show that it is trace preserving. Let $(e_i)_{i\in I}$ be an orthonormal basis for $\LT$, and let $\rho\in\TC$. Then $w^*-\sum_{i\in I}x_{e_i,e_i}=1$, and
\begin{equation*}\tr(\Theta(\mu)(\rho))=\sum_{i\in I}\la\Theta(\mu)(\rho),x_{e_i,e_i}\ra=\sum_{i\in I}\int_G\la r_s\rho r_{s^{-1}},x_{e_i,e_i}\ra d\mu(s).\end{equation*}
Let $\mc{F}(I)$ denote the directed set of finite subsets of $I$. Defining
\begin{equation*}f_F(s)=\sum_{i\in F}\la r_s\rho r_{s^{-1}},x_{e_i,e_i}\ra,\sk F\in\mc{F}(I),\hs s\in G,\end{equation*}
it follows that $(f_F)_{F\in\mc{F}(I)}$ is an equi-left uniformly continuous family in $LUC(G)$ satisfying $\sup_{F}\norm{f_F}_{\infty}<\infty$ and $f_F\rightarrow0$ pointwise, hence, $f_F\rightarrow0$ uniformly on compact sets \cite{N}. Now, using the regularity of $\mu$, one can interchange the sum and the integral to obtain
\begin{equation*}\tr(\Theta(\mu)(\rho))=\int_G\sum_{i\in I}\la r_s\rho r_{s^{-1}},x_{e_i,e_i}\ra d\mu(s)=\int_G\la r_s\rho r_{s^{-1}},1\ra d\mu(s)=\tr(\rho).\end{equation*}
Since $\rho\in\TC$ was arbitrary, the result follows.\end{proof}

To get a sense of what these channels look like, let us pursue a few concrete examples before continuing with applications.

\begin{example}\label{bf} Consider $G=\Z_2$. Any $\mu\in M_1(\Z_2)$ can be written as a convex combination of Dirac masses: $\mu=\mu(0)\delta_0+\mu(1)\delta_1$, in which case
\begin{equation*}\Theta(\mu)(\rho)=\mu(0)\rho+\mu(1)X\rho X^*,\end{equation*}
where $X=r_1={0\hs1 \choose 1\hs0}$, and $\rho\in\mc{B}(\C^2)$. Hence, we obtain the so-called bit-flip channels. Similarly, if $n\in\N$, taking $G$ to be the Cartesian product of $n$ copies of $\Z_2$, denoted $(\Z_2)^n$, yields the $n$-qubit bit flip channels. Indeed, as $r_{(s_1,...,s_n)}=\ten_{i=1}^nr_{s_i}$ for all $s=(s_1,...,s_n)\in(\Z_2)^n$, we may write
\begin{equation*}\Theta(\mu)(\rho)=\sum_{s\in(\Z_2)^n}\mu(s)X_s\rho X_s^*,\end{equation*}
where $X_s=\ten_{i=1}^nr_{s_i}$.\end{example}

\begin{example}\label{D4} Let $G=(\Z_2\times\Z_2)\rtimes_{\alpha}\Z_2$, where $\alpha$ is given by permutation. Note that this group is isomorphic to the dihedral group $D_4$. In this case $\Theta(\mu)(\rho)=\sum_{s\in G}\mu(s)r_s\rho r_{s^{-1}}$ acts on 3 qubits, and its Kraus operators are given by $\sqrt{\mu(s_1)}I\ten I\ten I$, $\sqrt{\mu(s_2)}I\ten X\ten I$, $\sqrt{\mu(s_3)}X\ten I\ten I$, $\sqrt{\mu(s_4)}X\ten X\ten I$, and
\begin{gather*}\sqrt{\mu(s_5)}\begin{pmatrix} 1 & 0 & 0 & 0\\ 0 & 0 & 1 & 0\\ 0 & 1 & 0 & 0\\ 0 & 0 & 0 & 1\end{pmatrix}\ten X,\sk\sqrt{\mu(s_6)}\begin{pmatrix} 0 & 0 & 1 & 0\\ 1 & 0 & 0 & 0\\ 0 & 0 & 0 & 1\\ 0 & 1 & 0 & 0\end{pmatrix}\ten X,\\
\sqrt{\mu(s_7)}\begin{pmatrix} 0 & 1 & 0 & 0\\ 0 & 0 & 0 & 1\\ 1 & 0 & 0 & 0\\ 0 & 0 & 1 & 0\end{pmatrix}\ten X,\sk \sqrt{\mu(s_8)}\begin{pmatrix} 0 & 0 & 0 & 1\\ 0 & 1 & 0 & 0\\ 0 & 0 & 1 & 0\\ 1 & 0 & 0 & 0\end{pmatrix}\ten X,\end{gather*}
where once again $X={0\hs1 \choose 1\hs0}$. We therefore obtain the so-called bit-flip swap channels, where the swapping of qubits is represented by the permutation action $\alpha$.\end{example}

\subsection{Fixed Point Algebras}

One reason why these channels may be of practical importance is that their fixed points have been completely characterized \cite{JN}. In fact, there is a direct connection between the fixed points of $\Theta(\mu)$ and the classical $\mu$-harmonic functions from random walk theory \cite{KV}. Let $\Hmu$ and $\Htmu$ be the fixed points of $\Theta(\mu)$ in $\LI$ and $\BLT$, respectively. That is,
\begin{equation*}\Hmu=\{f\in\LI\mid\mu\ast f=f\}\hs\hs\text{and}\hs\hs\Htmu=\{x\in\BLT\mid\Theta(\mu)(x)=x\}.\end{equation*}
Then $\Hmu$ are the $\mu$-harmonic functions in $\LI$, and we have $\Htmu\cong\Hmu\rtimes_l G$, the von Neumann crossed product of $\Hmu$ and $G$ under the action of $G$ on $\Hmu$ by left translation \cite[Proposition 6.3]{JN}. If $G$ is a $\sigma$-compact locally compact group and $\mu$ is a probability measure such that $\Hmu$ is a subalgebra of $\LI$, then $\Htmu=\Hmu\vee\LG$ \cite[\S4.4]{C}. In this case, a function $f\in\LI$ lies in $\Hmu$ if and only if it is constant on the (left) cosets of the closed subgroup generated by the support of $\mu$, denoted $G_\mu$, and we may identify $\Hmu$ with $L^{\infty}(G/G_\mu)$ \cite[Proposition 4.9]{JN}. A measure satisfying this condition is called a \e{Choquet--Deny} measure. Thus, for such measures we have
\begin{equation}\label{Htmu}\Htmu=L^{\infty}(G/G_\mu)\vee\LG.\end{equation}
It is known that every probability measure on a compact or locally compact abelian group is Choquet--Deny. We say that $\mu$ is \e{adapted} if $G_\mu=G$. In the case of Example \ref{bf}, we obtain a concrete expression for the fixed points of the $n$-qubit bit-flip channels. In particular, if $\mu$ is adapted, then $\Htmu=\mc{L}((\Z_2)^n)$.

If $H$ is a finite-dimensional Hilbert space, it is known that any bistochastic quantum channel $\Phi:\BH\rightarrow\BH$ satisfies
\begin{equation*}\text{Fix}(\Phi)=\{a_i,a_i^*\}',\end{equation*}
where $(a_i)$ are the Kraus operators of $\Phi$. In particular, the fixed point set is always a subalgebra of $\BH$ \cite[Theorem 2.1]{Kr}; see also \cite[Lemma 3.3]{BJKW}. It was shown in \cite[\S4]{AGG} that the infinite-dimensional generalization is false. However, to our knowledge, the only known counter-examples are of the form $\Theta(\mu)$ for some $\mu\in M(G)$, where $G$ is a countable discrete group (c.f. \cite[\S4]{Lim}). By the above, we know that for any $\sigma$-compact locally compact group $G$, the fixed points of $\Theta(\mu)$ form a subalgebra of $\BLT$ if and only if $\mu$ is a Choquet--Deny measure, which in the adapted case means $\Hmu=\C1$, that is, $\mu$ has trivial harmonic functions. Therefore, by \cite[Proposition 1.9]{R} (see also \cite[Theorem 4.2]{KV} and \cite[Proposition 2.1.3]{CL}), the following proposition is immediate.

\begin{prop} Let $G$ be a locally compact group. Then there exists an adapted probability measure $\mu$ such that $\Htmu$ is a subalgebra of $\BLT$ if and only if $G$ is amenable and $\sigma$-compact. In particular, $\Htmu$ is not a subalgebra of $\BLT$ for any adapted probability measure $\mu$ on a nonamenable locally compact group $G$.\end{prop}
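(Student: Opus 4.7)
The plan is to combine the identification recalled just before the proposition---that for $\sigma$-compact $G$, $\Htmu$ is a subalgebra of $\BLT$ if and only if $\mu$ is Choquet--Deny---with the classical theorem of Rosenblatt and Kaimanovich--Vershik cited in the statement, which characterises amenability of $\sigma$-compact locally compact groups in terms of the existence of an adapted probability measure with trivial bounded harmonic functions.

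First I would dispose of the $\sigma$-compactness half of the forward direction by showing that it is automatic as soon as \emph{any} adapted probability measure on $G$ exists. A finite regular Borel measure on $G$ is tight, so there exist compacts $K_n$ with $\mu(G\setminus\bigcup_n K_n)=0$, whence $\mathrm{supp}(\mu)\subseteq\overline{\bigcup_n K_n}$. Taking a compact symmetric neighborhood $V$ of $e$, one has $\overline{\bigcup_n K_n}\subseteq\bigcup_n K_nV$, a countable union of compacta, so $\mathrm{supp}(\mu)$ sits inside a $\sigma$-compact set. The algebraic subgroup it generates is $\sigma$-compact, and the same $V$-trick shows that its closure $G_\mu$ remains $\sigma$-compact. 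Since $\mu$ is adapted, $G=G_\mu$, and $\sigma$-compactness follows for free.

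Next I would use the paper's setup to translate the statement into purely measure-theoretic terms. Under $\sigma$-compactness, $\Htmu$ is a subalgebra of $\BLT$ if and only if $\mu$ is Choquet--Deny, which by \cite[Proposition 4.9]{JN} means that $\Hmu$ consists of functions constant on the left cosets of $G_\mu$. For adapted $\mu$ this reduces to $\Hmu=\C 1$, i.e.\ $\mu$ has only constant bounded harmonic functions. Combining with the previous paragraph, the existence of an adapted $\mu$ with $\Htmu$ a subalgebra is equivalent to $G$ being $\sigma$-compact together with the existence of an adapted probability measure on $G$ all of whose bounded harmonic functions are constant.

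Finally I would appeal to \cite[Proposition 1.9]{R} (see also \cite[Theorem 4.2]{KV} and \cite[Proposition 2.1.3]{CL}), which asserts that this latter condition is equivalent to $G$ being amenable and $\sigma$-compact, thereby completing both directions; the trailing ``in particular'' sentence is the contrapositive. The heavy lifting is done by the classical random-walk theorem that we invoke, and the only genuinely new step is the tightness argument ensuring $\sigma$-compactness is automatic, so I do not expect any serious obstacle.
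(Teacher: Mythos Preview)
Your proposal is correct and follows essentially the same route as the paper, which simply declares the proposition ``immediate'' from \cite[Proposition 1.9]{R}, \cite[Theorem 4.2]{KV}, and \cite[Proposition 2.1.3]{CL} after the preceding reduction to triviality of $\Hmu$ in the adapted case. Your explicit tightness argument showing that the mere existence of an adapted probability measure forces $\sigma$-compactness is a detail the paper leaves implicit in those citations, but it does not change the overall strategy.
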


For instance, $SL(2,\R)$ is a nonamenable $\sigma$-compact locally compact group, so that any adapted probability measure $\mu\in M_1(SL(2,\R))$ will produce a new counter-example.

Analogous to the finite-dimensional case, it was shown in \cite[Theorem 3.1]{Lim} (see also \cite[Theorem 3.5]{AGG}) that any bistochastic channel $\Phi:\BH\rightarrow\BH$ on a separable Hilbert space $H$ satisfies
\begin{equation*}\text{Fix}(\Phi)\cap\mc{K}(H)=\{a_i,a_i^*\}'\cap\mc{K}(H),\end{equation*}
where $(a_i)$ are the Kraus operators of $\Phi$, and $\mc{K}(H)$ is the ideal of compact operators. The following characterizes the above intersection for $\Phi=\Theta(\mu)$, where $\mu$ is any adapted probability measure.

\begin{prop} Let $\mu$ be an adapted probability measure on a locally compact group $G$. If $G$ is compact, then $\Htmu\cap\mc{K}(\LT)=C^*_l(G)$. If $G$ is not compact, then $\Htmu\cap\mc{K}(\LT)=\{0\}$.\end{prop}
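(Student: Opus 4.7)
The plan is to apply the intersection formula $\mathrm{Fix}(\Phi) \cap \mc{K}(H) = \{a_i, a_i^*\}' \cap \mc{K}(H)$ from \cite[Theorem 3.1]{Lim} (quoted just above) to $\Phi = \Theta(\mu)$, thereby reducing the claim to computing $\LG \cap \mc{K}(\LT)$. To make that reduction, I first produce a Kraus decomposition of $\Theta(\mu)$ whose operators lie in $\RG$: in the discrete setting the family $\{\sqrt{\mu(s)}\,r_s\}_{s \in \mathrm{supp}(\mu)}$ does the job, and in general one disintegrates $\Theta(\mu)(x) = \int_G r_s x r_{s^{-1}}\,d\mu(s)$ through a measurable section. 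Adaptedness of $\mu$ ensures that $\mathrm{supp}(\mu)$ generates $G$ as a closed subgroup, so these Kraus operators generate all of $\RG$ as a von Neumann algebra. Combined with $\RG' = \LG$ from \eqref{commute}, this yields $\Htmu \cap \mc{K}(\LT) = \LG \cap \mc{K}(\LT)$, so it remains to identify this intersection in each case.

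For $G$ compact, I would invoke the Peter--Weyl theorem: under the unitary $\LT \cong \bigoplus_{\pi \in \h{G}} H_\pi \ten_2 \overline{H_\pi}$, the left regular representation becomes $\bigoplus_\pi \pi \ten 1$, so $\LG$ identifies with the $\ell^\infty$-direct sum $\bigoplus_\pi \mc{B}(H_\pi)$ of finite-dimensional matrix algebras. An operator in this sum is compact on $\LT$ precisely when its block norms tend to zero, i.e., it lies in the $c_0$-direct sum, and a direct check using $l(f) \leftrightarrow (\pi(f))_\pi$ for $f \in \LO$ identifies that $c_0$-sum with the norm closure $C^*_l(G)$.

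For $G$ non-compact, it suffices to show $\LG \cap \mc{K}(\LT) = \{0\}$. A non-zero element $x$ of this intersection would produce, via the compact positive operator $x^*x \in \LG$ and the spectral projection $1_{[\varepsilon,\infty)}(x^*x) \in \LG$ for suitable $\varepsilon > 0$, a non-zero finite-rank projection $p \in \LG$. Its range $V$ is a finite-dimensional subspace of $\LT(G)$ invariant under $\LG' = \RG$, hence carries a finite-dimensional subrepresentation of the right regular representation of $G$. The main obstacle is ruling this out for non-compact $G$; I would appeal to the classical theorem that the regular representation of a non-compact locally compact group admits no non-zero finite-dimensional subrepresentation. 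Briefly, the resulting unitary $\pi: G \to \mc{U}(V)$ factors through a compact quotient $G/\ker\pi$ (since $\mc{U}(V)$ is compact), so $\ker\pi$ is non-compact; each $v \in V$ is then essentially right-$\ker\pi$-invariant (up to $\Delta^{1/2}$), and integrating $|v|^2$ along $\ker\pi$-cosets contradicts $v \in \LT(G)$ unless $v = 0$.
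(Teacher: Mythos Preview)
Your reduction strategy is conceptually clean but has two genuine gaps. First, the result \cite[Theorem~3.1]{Lim} you invoke is stated for \emph{separable} Hilbert spaces, whereas $\LT$ is separable only when $G$ is second countable; the proposition carries no such hypothesis. Second, and more seriously, the identification $\{a_i,a_i^*\}'=\LG$ is justified only in the discrete case. For non-discrete $G$ the operators $r_s$ with $s\in\mathrm{supp}(\mu)$ are \emph{not} Kraus operators of $\Theta(\mu)$: the actual Kraus operators come from an orthonormal basis $(e_i)$ of $L^2(G,\mu)$ via the Stinespring dilation and have the form $a_i=\int_G\overline{e_i(s)}\,r_{s^{-1}}\,d\mu(s)$ (weak integral). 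Your phrase ``disintegrates through a measurable section'' does not produce these, and even granting $a_i\in\RG$, the assertion that the family generates all of $\RG$ as a von Neumann algebra---so that its commutant is exactly $\LG$---needs a separate argument linking the algebra generated by the $a_i$ to that generated by $\{r_s:s\in\mathrm{supp}(\mu)\}$. Adaptedness alone, which speaks only of the latter set, does not close this gap.

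The paper avoids both problems by never passing through Kraus operators. In the compact case it uses that every probability measure on a compact group is Choquet--Deny, so $\Htmu=\LG$ outright by (\ref{Htmu}), and then cites \cite[Proposition~2.2]{CLR} for $\LG\cap\mc{K}(\LT)=C^*_l(G)$. In the non-compact case it argues directly: any $x\in\Htmu\cap\mc{K}(\LT)$ is fixed by $\Theta(\mu^n)$ for every $n$; Mukherjea's theorem gives $\mu^n\to 0$ weak* in $M(G)$; and compactness of $x$ forces $s\mapsto\la r_sxr_{s^{-1}},\rho\ra$ into $C_0(G)$, so pairing against $\mu^n$ yields $\la x,\rho\ra=0$. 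No separability, no Kraus commutant, and no need to compute $\LG\cap\mc{K}(\LT)$ in the non-compact case at all. Your finite-dimensional-subrepresentation argument for $\LG\cap\mc{K}(\LT)=\{0\}$ is correct and pleasant, but it proves a stronger statement than the proposition requires, at the cost of the unresolved reduction above.
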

\begin{proof} If $G$ is compact, then $\mu$ is Choquet--Deny, so $\Htmu=\LG$ by (\ref{Htmu}). But then, by \cite[Proposition 2.2]{CLR}, we have  $\Htmu\cap\mc{K}(\LT)=\LG\cap\mc{K}(\LT)=C^*_l(G)$, the right reduced $C^*$-algebra of $G$.

Suppose $G$ is not compact. If $x\in\Htmu\cap\mc{K}(\LT)$, then $x\in\tilde{\mc{H}}_{\mu^n}\cap\mc{K}(\LT)$ for all $n\in\N$ as $\Theta$ is an anti-homomorphism. Hence, for all $n\in\N$ and $\rho\in\TC$, we have
\begin{equation}\label{mu^n}\la x,\rho\ra=\la\Theta(\mu^n)(x),\rho\ra=\int_G\la r_sxr_{s^{-1}},\rho\ra d\mu^n(s).\end{equation}
Since $G$ is not compact, the convolution powers $\mu^n$ of $\mu$ converge to zero in the weak* topology of $M(G)$ \cite{M}. Also, as $x$ is compact, the function $s\mapsto\la r_sxr_{s^{-1}},\rho\ra$ lies in $C_0(G)$ for any $\rho\in\TC$, as is easily seen by approximating $x$ with finite rank operators. Thus, by weak* convergence of $\mu^n$ to 0, it follows from equation (\ref{mu^n}) that $\la x,\rho\ra=0$ for all $\rho\in\TC$, which implies $x=0$ (cf. also \cite[Proposition 2.2]{JN}).\end{proof}

Returning to finite dimensions, we finish this section with an application to the noiseless subsystems method of quantum error correction \cite{KLV}. This method relies on the well-known fact from operator algebras that any finite-dimensional unital $C^*$-algebra is unitarily equivalent to a direct sum of amplified matrix algebras, hence is of the form
\begin{equation}\label{decomp}\bigoplus_{k}M_{n_k}(\C)\ten I_{m_k},\sk k,n_k,m_k\in\N.\end{equation}
Since the fixed points of a bistochastic channel form a von Neumann subalgebra of $\BH$, where $H$ is finite-dimensional, it has such a decomposition, so the idea is to determine its explicit structure. If the decomposition contains nontrivial tensor factors, i.e., if $n_k>1$ for some $k$, then information encoded in $M_{n_k}(\C)$ will remain immune to the noise, eliminating the need for error correction.

Recall that for any probability measure $\mu$ on a locally compact group $G$, the fixed points of $\Theta(\mu)$ always contain $\LG$. Now, if $G$ is compact, the Peter-Weyl theorem implies that
\begin{equation*}\LG\cong\bigoplus_{[\pi]}M_{d_\pi}(\C)\ten I_{d_\pi},\end{equation*}
where the sum is taken over all (equivalence classes of) irreducible representations of $G$, and $d_\pi$ is the dimension of the representation \cite[Theorem 5.12]{F}. For example, if $G=D_4$, the dihedral group of order 8, we have
\begin{equation*}\mc{L}(D_4)\cong\C\oplus\C\oplus\C\oplus\C\oplus(M_2(\C)\ten I_2).\end{equation*}
Hence, $M_2(\C)$ is a noiseless subsystem for $\Theta(\mu)$ where $\mu$ is any probability measure on $D_4$. Moreover, we obtain a complete description of the noiseless subsystems for all adapted probability measures in Example \ref{D4}.

In order to implement this procedure in the laboratory, one requires the precise spatial decomposition of the Hilbert space giving rise to the decomposition (\ref{decomp}) \cite{HKL}. Fortunately, for our channels this is simply given by the coefficient functions of the irreducible representations of $G$, that is, functions of the form $\pi_{ij}(\cdot)=\la\pi(\cdot)e_j,e_i\ra$, where $(e_i)_{i=1}^{d_\pi}$ is an orthonormal basis of $H_{\pi}$. Since the irreducible representations of finite groups have been studied extensively, we arrive at a class of quantum channels whose noiseless subsystems may be computed explicitly.

\section{Representation of $\Mcb$}

If $G$ is a locally compact group, the set of coefficient functions of the left regular representation
\begin{equation*}A(G)=\{f:G\rightarrow\C\mid f(s)=\la l_s\xi,\eta\ra,\hs\xi,\eta\in\LT,\hs s\in G\},\end{equation*}
is called the \e{Fourier algebra} of $G$. It was shown by Eymard that, endowed with the norm
\begin{equation*}\norm{f}=\inf\{\norm{\xi}\norm{\eta}\mid f(\cdot)=\la\lm(\cdot)\xi,\eta\ra\},\end{equation*}
$A(G)$ is a Banach algebra under pointwise multiplication \cite[Proposition 3.4]{E}. Furthermore, it is the predual of the left group von Neumann algebra $\LG$, with $\la\lm(s),\psi\ra=\psi(s)$ for $s\in G$, $\psi\in A(G)$, so it has a natural operator space structure. A function $\vphi:G\rightarrow\C$ is a \e{completely bounded multiplier} of $A(G)$ if $m_\vphi(\psi)=\vphi\psi\in A(G)$ for all $\psi\in A(G)$ and $\norm{m_\vphi}_{cb}<\infty$. Note that these functions are automatically bounded and continuous. The set of such multipliers is denoted by $\Mcb$. See \cite{CH} for fundamental results on $\Mcb$.

Using an unpublished result of Gilbert, which implies that for every $\vphi\in\Mcb$ there exists a Hilbert space $H$ and two bounded continuous functions $\xi,\eta:G\rightarrow H$ such that
\begin{equation*}\vphi(st^{-1})=\la\eta(t),\xi(s)\ra\end{equation*}
for all $s,t\in G$, Neufang, Ruan and Spronk \cite{NRS} established the dual version of (\ref{M(G)}) by exhibiting a completely isometric algebra isomorphism
\begin{equation}\label{McbA(G)}\h{\Theta}:\Mcb\cong\NCBLTD\end{equation}
from the completely bounded Fourier multiplier algebra onto the algebra of normal completely bounded $\LI$-bimodule maps on $\BLT$ which leave $\LG$ globally invariant. If $\vphi\in\Mcb$ with Gilbert representation $\xi,\eta:G\rightarrow H$, and $(e_i)_{i\in I}$ is an orthonormal basis for $H$, the functions $\xi_i(s)=\la e_i,\xi(s)\ra$ and $\eta_i(t)=\la\eta(t),e_i\ra$ belong to $\LI$, and the map $\h{\Theta}(\vphi)$ is given by
\begin{equation*}\h{\Theta}(\vphi)(x)=w^*-\sum_{i\in I}M_{\xi_i}xM_{\eta_i},\sk x\in\BLT.\end{equation*}
Here, the invariance of $\LG$ under $\h{\Theta}(\vphi)$ follows from the relation $\h{\Theta}(\vphi)(l_s)=\vphi(s)l_s$, $s\in G$.

When $G$ is abelian with dual group $\h{G}$, the algebra $A(\h{G})$ is simply the image of the Fourier transform $\mc{F}:\LO\rightarrow C_0(\hat{G})$. In this sense, $\LO$ and $A(G)$ are dual to one another. Since it follows from Wendel's theorem that $M(G)\cong M_{cb}(\LO)$ \cite{W}, we may regard $\Mcb$ as the natural dual object of the measure algebra $M(G)$, and the representations $\Theta$ and $\h{\Theta}$ show that both algebras can be represented on the same space $\BLT$ in a manner which perfectly displays their duality.

By Bochner's theorem \cite[Theorem 4.18]{F}, the natural dual object to probability measures are positive definite functions of norm one. Given $\vphi\in\PO$, there exists a cyclic unitary representation $(\pi,\xi)$ such that $\vphi(s)=\la\pi(s)\xi,\xi\ra$, $s\in G$. Letting $(e_i)_{i\in I}$ be an orthonormal basis for $H_\pi$, and $\xi_i(s)=\la e_i,\pi(s)^*\xi\ra$, for $i\in I$, $s\in G$, it follows that
\begin{equation}\label{dualchannel}\h{\Theta}(\vphi)(x)=w^*-\sum_{i\in I}M_{\xi_i}xM_{\xi_i}^*,\sk x\in\BLT.\end{equation}
As $\h{\Theta}(\vphi)(1)=\h{\Theta}(\vphi)(l_e)=\vphi(e)1=1$, and the Kraus operators of $\h{\Theta}(\vphi)$ commute, Theorem \ref{TP} entails the following result.

\begin{prop} Let $G$ be a locally compact group. If $\vphi\in\PO$, then $\h{\Theta}(\vphi)$ is a bistochastic quantum channel on $\BLT$.\end{prop}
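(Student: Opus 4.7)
The proposition is essentially set up by the paragraph immediately preceding it, and the plan is to make the observations there rigorous and invoke \thmref{TP}.

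My first step would be to confirm that the expression~(\ref{dualchannel}) does define an element of $\mc{CP}^\sigma(\BLT)$. The operators $a_i := M_{\xi_i}$ are bounded, since $\xi_i \in \LI$ with $\norm{\xi_i}_\infty \le \norm{\xi}$. To apply Haagerup's representation theorem, I need to verify that $\sum_i a_i a_i^*$ converges weak* in $\BLT$; this will follow for free once unitality is established in the next step.

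Next I would verify unitality. By linearity and normality of $\h{\Theta}(\vphi)$, together with the relation $\h{\Theta}(\vphi)(l_s) = \vphi(s)l_s$ stated in the excerpt, we have $\h{\Theta}(\vphi)(1) = \h{\Theta}(\vphi)(l_e) = \vphi(e)\cdot 1$. Since $\vphi \in \PO$, the earlier bound $\abs{\vphi(s)} \le \vphi(e)$ forces $\vphi(e) = \norm{\vphi}_\infty = 1$, so $\h{\Theta}(\vphi)$ is unital. Writing this out in terms of the Kraus form also yields $w^*\text{-}\sum_i M_{\xi_i} M_{\xi_i}^* = w^*\text{-}\sum_i M_{\abs{\xi_i}^2} = 1$, which simultaneously justifies the weak* convergence required for Haagerup's theorem.

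Finally, to upgrade unitality to bistochasticity via \thmref{TP}, I would exploit the key structural feature that all Kraus operators $M_{\xi_i}$ live in the commutative algebra $\LI$. Consequently $M_{\xi_i}^* M_{\xi_i} = M_{\abs{\xi_i}^2} = M_{\xi_i} M_{\xi_i}^*$, so the identity already obtained gives
\begin{equation*}
w^*\text{-}\sum_{i \in I} M_{\xi_i}^* M_{\xi_i} \;=\; w^*\text{-}\sum_{i \in I} M_{\xi_i} M_{\xi_i}^* \;=\; 1.
\end{equation*}
\thmref{TP} then concludes that $\h{\Theta}(\vphi)$ is bistochastic.

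There is no real obstacle here: the content is pre-packaged by the construction of the Kraus operators from the Gilbert-type representation of $\vphi$ (via a cyclic unitary representation) and by \thmref{TP}. The only subtlety worth pausing on is making sure that $\vphi(e)=1$ from $\vphi \in \PO$, and that the commutativity of the Kraus operators is genuinely a consequence of their lying in $\LI$ rather than an additional hypothesis.
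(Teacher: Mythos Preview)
Your proposal is correct and follows essentially the same approach as the paper: establish unitality via $\h{\Theta}(\vphi)(l_e)=\vphi(e)1=1$, then use the fact that the Kraus operators $M_{\xi_i}$ lie in the commutative algebra $\LI$ (so $a_i^*a_i=a_ia_i^*$) to conclude bistochasticity from \thmref{TP}. The paper compresses this into a single sentence before the proposition, while you spell out the intermediate verifications more explicitly, but the argument is the same.
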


The fixed points of $\h{\Theta}(\vphi)$ have also been recently characterized for arbitrary locally compact groups $G$ and $\vphi\in\PO$. To this end, we let $\Hphi$ and $\Htphi$ denote the fixed points of $\h{\Theta}(\vphi)$ in $\LG$ and $\BLT$, respectively. It is known that the set $G_\vphi=\{s\in G\mid\vphi(s)=1\}$ is a closed subgroup of $G$, and $\mc{H}_\vphi=\mc{L}(G_\vphi)$ \cite[\S3.2]{CL}. Hence, by \cite[Theorem 7.5]{KNR} (see also \cite[\S4.4]{C} and \cite[Theorem 4.8]{NR})
\begin{equation}\label{vn}\tilde{\mc{H}}_\vphi=\mc{L}(G_\vphi)\vee\LI.\end{equation}
Note that contrary to the representation of $M(G)$, the fixed point set $\Htphi$ is \e{always} a von Neumann subalgebra of $\BLT$ for every $\vphi\in\PO$. In fact, it is known that the fixed points of every quantum channel on $\BH$ that has a Kraus representation with countably many commuting normal operators form a subalgebra of $\BH$ for an arbitrary Hilbert space $H$ \cite{P}.

\begin{example} Let $G=\Z_2$. Then any $\vphi\in\mc{P}_1(\Z_2)$ can be written as a convex combination of characters: $\vphi=p(0)\chi^0+p(1)\chi^1$, where $\chi^s$ is the character on $\Z_2$ corresponding to $s$, for $s=0,1$. Hence,
\begin{equation*}\h{\Theta}(\vphi)(\rho)=p(0)\rho+p(1)Z\rho Z^*,\end{equation*}
where $Z=M_{\chi^1}={1\hs0 \choose 0\hs-1}$, and $\rho\in\mc{B}(\C^2)$. We therefore obtain the so-called phase-flip channels. Similarly, if $G=(\Z_2)^n$, then any $\vphi\in\mc{P}_1((\Z_2)^n)$ can be written as a convex combination of characters \cite[\S3.3]{F}, and we obtain the $n$-qubit phase-flip channels. Indeed, if $\vphi=\sum_{s\in(\Z_2)^n}p(s)\chi^s$, we may write
\begin{equation*}\h{\Theta}(\vphi)(\rho)=\sum_{s\in(\Z_2)^n}p(s)Z_s\rho Z_s^*,\end{equation*}
where $s=(s_1,...,s_n)\in(\Z_2)^n$ and $Z_s=\ten_{i=1}^nM_{\chi^{s_i}}$, with $\chi^{s_i}$ the character on $\Z_2$ corresponding to $s_i$. Equation (\ref{vn}) therefore provides a concrete expression for the fixed points of these channels. In particular, if $\vphi$ is adapted (i.e., $G_{\vphi}=\{e\}$, see \cite[Definition 2.3]{NR}), then $\tilde{H}_\vphi=L^{\infty}((\Z_2)^n)$, meaning that $\h{\Theta}(\vphi)$ only fixes diagonal operators.\end{example}

We remark that channels of the form $\h{\Theta}(\vphi)$ for $G=\R$ have been recently considered by Holevo and Shirokov as examples of infinite-dimensional quantum channels (see \cite[Example 4]{Sh}, for instance).

\subsection{The Asymptotic Quantum Birkhoff Conjecture}

A famous theorem of Birkhoff states that the set of $d\times d$ bistochastic matrices is a convex set whose extreme points are the $d!$ permutation matrices \cite{Br}. In the quantum setting, bistochastic channels play the role of bistochastic matrices, so reasoning by pure analogy one would expect the extreme points of this convex set to be the unitary conjugations. For $d=2$ this is the case, but in general it is false \cite[Theorem 1]{LS}. Shortly after the result of \cite{LS}, it was conjectured that as the number of channel uses grows, bistochastic channels behave more like random unitaries. Formally speaking, if $\mc{RU}(M_d(\C))$ and $\mc{BIS}_d$ denote the (closed) convex hull of unitary conjugations and bistochastic channels in dimension $d$, respectively, then the asymptotic quantum Birkhoff conjecture (AQBC) states that any $\Phi\in\mc{BIS}_d$ satisfies
\begin{equation}\label{asqb}\lim_{n\rightarrow\infty}\norm{\Phi^{\ten^n}-\mc{RU}(M_d(\C)^{\ten^n})}_{cb}=0.\end{equation}
This conjecture remained open until last year, when Haagerup and Musat provided a counter-example \cite[\S3]{HM}, and, independently, Ostrev, Oza and Shor produced another one. In what follows, we will obtain a new class of counter-examples arising from the representation of $\Mcb$.

If $G$ is a finite group, then any $\vphi\in\PO$ with cyclic representation $(\pi,\xi)$ determines a $\abs{G}\times\abs{G}$ matrix $C_\vphi$ whose $(s,t)$ entry is given by $\vphi(st^{-1})=\la\pi(t)^*\xi,\pi(s)^*\xi\ra$. Since $\vphi$ is positive definite of norm one, $C_\vphi$ is a positive semi-definite matrix with diagonal entries equal to one, that is, a correlation matrix \cite{BPS}. In fact, the quantum channel $\h{\Theta}(\vphi)$ is completely determined by $C_\vphi$ in the following sense:
\begin{equation*}\h{\Theta}(\vphi)(x)=C_\vphi\circ_S x,\sk x\in\mc{B}(\LT),\end{equation*}
where $\circ_S$ denotes the Schur (or Hadamard) product of matrices. Hence, the resulting channels are Schur maps.

Let $\C_d$ denote the convex set of $d\times d$ complex correlation matrices, and let $\mc{E}(\C_d)$ denote its extreme points. The following two characterizations of $\mc{E}(\C_d)$ will be important for us.

\begin{thm}\cite[Theorem 1]{LT}\label{corr} Let $A$ be a $d\times d$ correlation matrix of rank $r$, and let $A=XX^*$ be its canonical decomposition, where $X\in M_{d\times r}(\C)$. Then $A\in\mc{E}(\C_d)$ if and only if
\begin{equation}\label{extremevphi}\mathrm{span}\{x_{\xi_j,\xi_j} : 1\leq j\leq d\}=\mc{H}_r,\end{equation}
where $\xi_j$ is the $j^{th}$ column of $X^*$, and $\mc{H}_r$ is the real linear space of $r\times r$ hermitian matrices.\end{thm}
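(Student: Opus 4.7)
The plan is to characterize extremality of $A$ via admissible Hermitian perturbations, and then convert that characterization into the spanning condition (\ref{extremevphi}) using the trace duality on $\mc{H}_r$.

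First, I would note that $A \not\in \mc{E}(\C_d)$ if and only if there exist $\varepsilon > 0$ and a nonzero Hermitian matrix $H$ with zero diagonal such that both $A + \varepsilon H$ and $A - \varepsilon H$ lie in $\C_d$. The forward direction comes from writing $A = \tfrac{1}{2}(B+C)$ with $B, C \in \C_d$, $B \neq C$, and setting $H = B - C$; the reverse is immediate. The zero-diagonal requirement encodes preservation of the unit diagonal.

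Next, I would characterize the admissible $H$'s through the range of $A$. Writing $A = XX^*$ with $X \in M_{d\times r}(\C)$ of full column rank $r$, the key claim is that $A \pm \varepsilon H \geq 0$ for some $\varepsilon > 0$ if and only if $\mathrm{range}(H) \subseteq \mathrm{range}(X)$, equivalently $H = XYX^*$ for a (necessarily unique) Hermitian $Y \in M_r(\C)$. The easy direction is that $A$ is positive definite on $\mathrm{range}(X)$ and zero on its orthogonal complement, so perturbations $\varepsilon XYX^*$ with range in $\mathrm{range}(X)$ do not break positivity for small $\varepsilon$. The nontrivial direction is that for $v \in \ker A$, testing $A \pm \varepsilon H \geq 0$ against $v + tu$ for arbitrary $u \in \C^d$, $t \in \R$ first forces $\la Hv, v\ra = 0$ and then, by examining the resulting linear term in $t$, gives $\la Hv, u\ra = 0$ for all $u$, hence $Hv = 0$. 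Uniqueness of $Y$ follows from the invertibility of $X^*X$, so nonzero $H$ correspond bijectively to nonzero Hermitian $Y$.

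Finally, I would translate the zero-diagonal condition into the trace pairing. A direct computation using $(\xi_j)_k = \overline{X_{jk}}$ yields
\[
H_{jj} = (XYX^*)_{jj} = \la Y\xi_j, \xi_j\ra = \tr(Y\, x_{\xi_j, \xi_j}).
\]
Hence a nonzero admissible $H$ with zero diagonal exists iff there is a nonzero $Y \in \mc{H}_r$ with $\tr(Y\, x_{\xi_j, \xi_j}) = 0$ for every $j$. Since $(Y,Z) \mapsto \tr(YZ)$ is a nondegenerate real-bilinear pairing on $\mc{H}_r$, this is equivalent to $\mathrm{span}\{x_{\xi_j, \xi_j} : 1 \leq j \leq d\}$ being a proper real subspace of $\mc{H}_r$. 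Negating yields the stated equivalence.

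The one delicate point is the range characterization of admissible perturbations in the second step; everything else is bookkeeping through the factorization $A = XX^*$ and the standard self-duality of $\mc{H}_r$ under the trace form.
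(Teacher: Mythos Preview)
The paper does not supply its own proof of this statement: Theorem~\ref{corr} is quoted verbatim from \cite[Theorem~1]{LT} and used as a black box. So there is no in-paper argument to compare against.

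Your argument is correct and is essentially the standard one (and matches the approach in \cite{LT}). The three ingredients---(i) non-extremality in $\C_d$ is equivalent to the existence of a nonzero Hermitian zero-diagonal perturbation $H$ with $A\pm\varepsilon H\geq 0$; (ii) such $H$ must satisfy $\ker A\subseteq\ker H$, hence $H=XYX^*$ for a unique $Y\in\mc{H}_r$; (iii) the diagonal condition $H_{jj}=0$ becomes $\tr(Y\,x_{\xi_j,\xi_j})=0$---combine exactly as you describe, and the final step is the nondegeneracy of the trace form on $\mc{H}_r$. The range inclusion in (ii), which you flag as the delicate point, is handled correctly: testing $A\pm\varepsilon H\geq 0$ against $v+tu$ with $v\in\ker A$ first kills $\la Hv,v\ra$ and then, via the discriminant of the resulting quadratic in $t$, forces $\Re\la Hv,u\ra=0$; replacing $u$ by $iu$ gives $Hv=0$. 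One cosmetic remark: when you say ``examining the resulting linear term in $t$'', it is cleaner to phrase it as a discriminant condition on the quadratic $t\mapsto\la(A\pm\varepsilon H)(v+tu),v+tu\ra$, since the quadratic coefficient need not be strictly positive a priori; but the conclusion is the same either way.
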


\begin{thm}\cite[Lemma 2.4]{BPS}\label{corr2} Let $A$ be a $d\times d$ correlation matrix, and let $S_A:M_d(\C)\rightarrow M_d(\C)$ be its corresponding bistochastic Schur map. Then $A\in\mc{E}(\C_d)$ if and only if $S_A\in\mc{E}(\mc{BIS}_d)$.\end{thm}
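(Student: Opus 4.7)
The plan is to prove the equivalence by a direct convex-decomposition argument in each direction. The easier implication is $(\Leftarrow)$: since $B\mapsto S_B$ is an affine injection from $\C_d$ into $\mc{BIS}_d$, any decomposition $A=tB_1+(1-t)B_2$ inside $\C_d$ with $t\in(0,1)$ immediately yields $S_A=tS_{B_1}+(1-t)S_{B_2}$ inside $\mc{BIS}_d$; extremality of $S_A$ in $\mc{BIS}_d$ then forces $S_{B_1}=S_{B_2}=S_A$, hence $B_1=B_2=A$.

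For the non-trivial direction $(\Rightarrow)$, we take $A\in\mc{E}(\C_d)$ and suppose $S_A=tT_1+(1-t)T_2$ with $T_1,T_2\in\mc{BIS}_d$ and $t\in(0,1)$. The first step is to show that each $T_j$ fixes the diagonal matrix units. Since $A$ has $1$'s on the diagonal, $S_A(E_{ii})=E_{ii}$, so
\begin{equation*}
E_{ii}=tT_1(E_{ii})+(1-t)T_2(E_{ii}).
\end{equation*}
Because each $T_j$ is trace preserving and positive, $T_j(E_{ii})$ is a density matrix; and since $E_{ii}$ is a rank-one projection and therefore an extreme point of the set of density matrices, this forces $T_j(E_{ii})=E_{ii}$ for all $i$ and $j$.

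The key step is then to invoke Choi's multiplicative domain theorem: for a unital completely positive map, any projection sent to a projection lies in the multiplicative domain. Applied here, each $E_{ii}$ lies in the multiplicative domain of $T_j$, so for any $i,j$,
\begin{equation*}
T_j(E_{ij})=T_j(E_{ii}E_{ij}E_{jj})=E_{ii}T_j(E_{ij})E_{jj}\in\C E_{ij}.
\end{equation*}
Consequently each $T_j=S_{B_j}$ is itself a Schur map, whose matrix $B_j$ is a correlation matrix (positive semi-definite because $T_j$ is CP, with unit diagonal because $T_j$ is unital). The identity $A=tB_1+(1-t)B_2$ then holds in $\C_d$, and the extremality of $A$ forces $B_1=B_2=A$, so $T_1=T_2=S_A$.

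The main obstacle is that \emph{a priori} a decomposition of $S_A$ inside $\mc{BIS}_d$ need not respect the Schur structure, since $\mc{BIS}_d$ is considerably larger than the image of $\C_d$ under $B\mapsto S_B$. The crucial ingredient making the argument go through is that the rank-one extremality of $E_{ii}$ among density matrices, combined with trace preservation of the $T_j$, automatically places enough projections in each multiplicative domain to force the Schur-map form; no auxiliary twirling or symmetrization over diagonal unitaries is required.
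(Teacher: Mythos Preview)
Your argument is correct. Note, however, that the paper does not supply its own proof of this statement: the theorem is quoted verbatim from \cite[Lemma 2.4]{BPS} and used as a black box, so there is nothing in the present paper to compare against.

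For what it is worth, your route via Choi's multiplicative domain is a clean and self-contained way to establish the result. The essential observation---that once the diagonal matrix units $E_{ii}$ are fixed by a unital completely positive map, they lie in its multiplicative domain and hence force the map to be diagonal on all matrix units---is exactly the mechanism that pins an arbitrary bistochastic decomposer $T_j$ back down to a Schur map $S_{B_j}$. One small point of presentation: when you say ``any projection sent to a projection lies in the multiplicative domain'', what you actually use (and what you have) is the stronger fact that $T_j(E_{ii})=E_{ii}$; this immediately gives $T_j(E_{ii}^2)=T_j(E_{ii})=T_j(E_{ii})^2$, so both the left and right Schwarz equalities hold and $E_{ii}$ is in the two-sided multiplicative domain. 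Everything else goes through as written.
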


In the case of finite groups $G$, the matrix $C_{\vphi}$ is simply the Gram matrix of the set of vectors $\{\pi(s)^*\xi : s\in G\}$, where $\vphi=(\pi,\xi)\in\PO$. Hence, in the decomposition $C_{\vphi}=XX^*$, we may take $X^*$ to be the matrix whose columns are precisely the vectors $\pi(s)^*\xi$, $s\in G$. To get a sense of when $C_{\vphi}$ yields a random unitary channel, we first consider the case of unitary maps.

\begin{prop} Let $G$ be a locally compact group, and let $\vphi=(\pi,\xi)\in\PO$. Then the quantum channel $\h{\Theta}(\vphi)$ is a unitary conjugation if and only if $\vphi$ is a character.\end{prop}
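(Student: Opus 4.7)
The plan is to prove both implications, with the backward direction being the substantive one. For the forward direction, I would observe that a character $\vphi:G\rightarrow\T$ is itself a one-dimensional unitary representation of $G$ with cyclic unit vector $\xi=1\in\C$. Under this choice of $(\pi,\xi)$, the sum in (\ref{dualchannel}) collapses to a single term $\h{\Theta}(\vphi)(x)=M_\vphi x M_\vphi^*$; since $|\vphi(s)|=1$ pointwise, $M_\vphi$ is unitary, and hence $\h{\Theta}(\vphi)$ is a unitary conjugation.

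For the converse, suppose $\h{\Theta}(\vphi)(x)=UxU^*$ for some unitary $U\in\BLT$. The strategy is to recover $U$ from two structural features of $\h{\Theta}(\vphi)$ highlighted in the text. First, since $\h{\Theta}(\vphi)$ is a unital $\LI$-bimodule map, conjugation by $U$ fixes $\LI$ pointwise; as $\LI$ is a MASA in $\BLT$, this forces $U\in\LI$, i.e., $U=M_g$ for some $g\in\LI$ with $|g|=1$ a.e. Second, the relation $\h{\Theta}(\vphi)(l_s)=\vphi(s)l_s$ (already noted in the text) now reads $M_g l_s M_g^*=\vphi(s)l_s$, which upon evaluation on $\LT$ translates to the pointwise-a.e.\ identity
\begin{equation*}
g(t)\overline{g(s^{-1}t)}=\vphi(s)
\end{equation*}
for each $s\in G$.

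Two consequences of this key identity finish the proof. Taking absolute values immediately yields $|\vphi(s)|=1$ for every $s\in G$. Composing the identity with itself (once with parameter $s$ at $t$, once with parameter $u$ at the shifted point $s^{-1}t$) and using $|g|=1$ a.e.\ produces $g(t)\overline{g((su)^{-1}t)}=\vphi(s)\vphi(u)$; comparing with the same identity applied directly to $su$ gives $\vphi(su)=\vphi(s)\vphi(u)$ for all $s,u\in G$. Combined with the continuity of $\vphi\in\PG$, this exhibits $\vphi$ as a continuous homomorphism $G\to\T$, i.e., a character. The main subtlety is the measure-theoretic bookkeeping in this last step: for each $s$ the identity fails on some Haar-null set $E_s$, so one must select a common $t$ outside $E_s\cup sE_u\cup E_{su}$ before extracting the multiplicative relation; this is always possible since a finite union of null sets has complement of positive measure, and the resulting pointwise equation for $\vphi(s)$, $\vphi(u)$, $\vphi(su)$ is independent of $t$.
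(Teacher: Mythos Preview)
Your proof is correct and follows essentially the same route as the paper's: both directions proceed identically through the observation that $U$ must lie in $\LI$ (via the bimodule property and the fact that $\LI$ is a MASA), and both extract the key pointwise identity $g(t)\overline{g(s^{-1}t)}=\vphi(s)$ from the action $\h{\Theta}(\vphi)(l_s)=\vphi(s)l_s$. The only divergence is in the final step: having obtained $|\vphi(s)|=1$ for all $s$, the paper simply invokes \cite[Theorem 32.7]{HR} (a continuous positive definite function of constant modulus one is a character), whereas you derive the multiplicativity $\vphi(su)=\vphi(s)\vphi(u)$ directly by composing two instances of the functional identity and handling the null-set bookkeeping explicitly. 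Your argument is thus more self-contained, at the cost of a short extra computation; the paper's is shorter but relies on an external reference.
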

\begin{proof} If $\vphi$ is a character, then $d_\pi=1$, and by construction $\h{\Theta}(\vphi)(x)=M_{\vphi}xM_{\vphi}^*$, $x\in\BLT$, with $M_{\vphi}$ unitary. Conversely, suppose $\h{\Theta}(\vphi)=U(\cdot)U^*$ for some unitary $U\in\BLT$. As $\h{\Theta}(\vphi)$ is an $\LI$-bimodule map, there exists $\chi\in\LI$ such that $U=M_\chi$ and $\abs{\chi(s)}=1$ for all $s\in G$. But then, recalling the action of $\h{\Theta}(\vphi)$ on $\LG$, and noting $l_sM_f=M_{l_sf}l_s$ for $f\in\LI$ and $s\in G$, we have
\begin{equation*}\h{\Theta}(\vphi)(l_s)=\vphi(s)l_s=M_\chi l_sM_\chi^*=M_\chi M_{l_s\overline{\chi}}l_s\end{equation*}
for all $s\in G$. Thus, $\vphi(s)=\chi(t)\overline{\chi(s^{-1}t)}$ for all $s,t\in G$ implying $\abs{\vphi(s)}=1$ for every $s\in G$, which makes $\vphi$ a character by \cite[Theorem 32.7]{HR}.\end{proof}

In perfect duality we have the following result which follows from \cite[Theorem 3.7]{KN} together with \cite[Example 3.2]{KN} and \cite[Example 3.3]{KN}, where the corresponding result was shown for arbitrary locally compact quantum groups.

\begin{prop} Let $G$ be a locally compact group, and let $\mu\in M_1(G)$. Then the quantum channel $\Theta(\mu)$ is a unitary conjugation if and only if $\mu$ is a point mass.\end{prop}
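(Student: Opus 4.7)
The plan is to prove the two implications separately. The forward direction is essentially a one-line computation: if $\mu=\delta_s$ for some $s\in G$, then $\Theta(\delta_s)(x)=r_sxr_{s^{-1}}=r_sxr_s^*$ as $r_s$ is unitary, so $\Theta(\delta_s)$ is the unitary conjugation by $r_s$. The real work is in the converse.

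For the converse, my strategy is to exploit that every unitary conjugation is a $*$-homomorphism of $\BLT$, combined with the fact that $\Theta(\mu)$ leaves $\LI$ globally invariant and acts there by $M_f\mapsto M_{\mu\ast f}$. First I would deduce that $f\mapsto\mu\ast f$ is a normal unital $*$-endomorphism of the commutative algebra $\LI$, yielding
\begin{equation*}\mu\ast(ff')=(\mu\ast f)(\mu\ast f')\end{equation*}
in $\LI$ for every $f,f'\in\LI$. Next I would specialize to $f,f'\in C_0(G)\subseteq\LI$: since the convolution of a finite measure with a $C_0$-function remains in $C_0(G)$ (by continuity of translation and dominated convergence), both sides of the displayed equality are continuous, and the identity holds pointwise everywhere on $G$. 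Evaluating at $e\in G$ and using $\mu\ast g(e)=\int_G g\,d\mu$ from equation (\ref{fdotmu}) gives
\begin{equation*}\int_G ff'\,d\mu=\Bigl(\int_G f\,d\mu\Bigr)\Bigl(\int_G f'\,d\mu\Bigr),\qquad f,f'\in C_0(G).\end{equation*}
Thus $I(f):=\int_G f\,d\mu$ is a nonzero multiplicative linear functional on $C_0(G)$; since the Gelfand spectrum of $C_0(G)$ is $G$, the functional $I$ must be a point evaluation at some $s\in G$, and the Riesz representation theorem then yields $\mu=\delta_s$.

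The main technical point is the passage from the $\LI$-identity obtained via $*$-homomorphism to the pointwise identity that delivers multiplicativity of $I$; this is precisely what forces the restriction from arbitrary $f\in\LI$ to $f\in C_0(G)$. A conceptually cleaner but less elementary alternative would be to observe that $\Theta$ is an injective affine map on $M_1(G)$ and that unitary conjugations are extreme in $\mc{BIS}(\BLT)$ (since $*$-isomorphisms are extreme among UCP maps via the Stinespring picture); hence $\mu$ must be extreme in the simplex $M_1(G)$ and therefore a point mass. Either route avoids the locally compact quantum group machinery of \cite{KN} cited by the authors.
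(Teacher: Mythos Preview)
Your proposal is correct. Both the main argument (via multiplicativity of the restricted action on $C_0(G)$) and the alternative extreme-point argument are valid; in particular, your passage from the $L^\infty$-identity to a pointwise identity on $C_0(G)$ is sound because $\mu\ast f\in C_0(G)$ whenever $f\in C_0(G)$, and two continuous functions agreeing almost everywhere with respect to Haar measure agree everywhere.

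The paper does not give a self-contained proof of this proposition at all: it simply invokes \cite[Theorem~3.7, Examples~3.2--3.3]{KN}, where the statement is established in the much broader framework of locally compact quantum groups. Your argument is therefore genuinely different and considerably more elementary---it stays entirely within classical harmonic analysis and basic $C^*$-algebra theory (Gelfand duality for $C_0(G)$), whereas the cited route requires the machinery of quantum group multipliers. What your approach buys is a transparent, textbook-level proof tailored to the group case; what the paper's citation buys is uniformity, since the quantum-group result simultaneously covers both this proposition and its dual (Proposition on $\h{\Theta}(\vphi)$ and characters) as special instances of a single theorem. Your extreme-point alternative is also a nice observation, and in fact mirrors the spirit of the paper's treatment of the dual side, where extremality considerations play a central role.
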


In light of the above, one might guess that the channel $\h{\Theta}(\vphi)$ is random unitary if and only if $\vphi$ is a convex combination of characters, but as we shall see in Remark \ref{notRU}, this is not the case. We now discuss a case where $\h{\Theta}(\vphi)$ is not random unitary, which will provide a concrete counter-example to the AQBC.

\begin{example}\label{S_3} Let $G=S_3$. Then $S_3$ has a two-dimensional irreducible representation $\pi:S_3\rightarrow \mc{B}(\C^2)$, given by its action on the triangle:
\begin{gather*}
    \pi(e)=\begin{pmatrix} 1 & 0\\ 0 & 1 \end{pmatrix},\hs\hs
    \pi(123)=\begin{pmatrix} e^{i2\pi/3} & 0\\ 0 & e^{-i2\pi/3} \end{pmatrix},\hs\hs
    \pi(132)=\begin{pmatrix} e^{-i2\pi/3} & 0\\ 0 & e^{i2\pi/3} \end{pmatrix},\\\\
    \pi(12)=\begin{pmatrix} 0 & 1\\ 1 & 0 \end{pmatrix},\hs\hs
    \pi(23)=\begin{pmatrix} 0 & e^{-i2\pi/3}\\ e^{i2\pi/3} & 0 \end{pmatrix},\hs\hs
    \pi(13)=\begin{pmatrix} 0 & e^{i2\pi/3}\\ e^{-i2\pi/3} & 0 \end{pmatrix}.\end{gather*}
Let $\{e_1,e_2\}$ denote the standard basis of $\C^2$, and take $\xi=\frac{1}{\sqrt{10}}(ie_1+3e_2)$, for instance. One easily verifies that the positive definite function $\vphi(s)=\la\pi(s)\xi,\xi\ra$, $s\in G$, satisfies equation (\ref{extremevphi}). Therefore, by Theorems \ref{corr} and \ref{corr2}, the quantum channel $\h{\Theta}(\vphi)$ is an extreme point in $\mc{BIS}_6$. Furthermore, $\h{\Theta}(\vphi)$ is not random unitary as $\vphi$ is not a character on $S_3$.\end{example}

As is well-known, an element $\vphi\in\PO$ is an extreme point if and only if the associated representation $\pi$ is irreducible \cite[Thereom 3.25]{F}. By injectivity of $\h{\Theta}$, we see that this condition is necessary for $\h{\Theta}(\vphi)$ to be extreme within the set of bistochastic channels. However, it is not sufficient. Indeed, taking $\pi:S_3\rightarrow\mc{B}(\C^2)$ as in Example \ref{S_3}, and defining $\vphi=(\pi,\xi)\in\mc{P}_1(S_3)$ with $\xi=e_1$, one can easily show that equation (\ref{extremevphi}) is not satisfied. Hence, by Theorems \ref{corr} and \ref{corr2}, $\h{\Theta}(\vphi)$ is not extreme. This motivates the following definition.

\begin{defn} Let $G$ be a finite group. An element $\vphi\in\PO$ is \e{maximally extreme} if $\h{\Theta}(\vphi)$ is an extreme point of $\mc{BIS}_{\abs{G}}$.\end{defn}

For example, the positive definite function $\vphi$ from Example \ref{S_3} satisfies equation (\ref{extremevphi}), and is therefore maximally extreme. Being dual to the random unitary channels $\Theta(\mu)$ arising from probability measures in $M(S_3)$, one would expect that the resulting channel $\h{\Theta}(\vphi)$ is ``far away'' from the set of all random unitaries in six dimensions. This is indeed the case, and with the help of factorizable maps \cite{HM}, it is not difficult to prove.

In \cite[Corollary 2.3]{HM}, Haagerup and Musat showed that every $\Phi\in\mc{BIS}_d$ which is extreme within the set of completely positive maps in $M_d(\C)$ is not factorizable. Since the channels $\h{\Theta}$ have commutating Kraus operators, being extreme in $\mc{CP}(M_d(\C))$ is the same as being extreme in $\mc{BIS}_d$ \cite[Corollary 1]{LS}. Thus, if $\vphi\in\PO$ is maximally extreme with a cyclic representation $(\pi,\xi)$ satisfying $d_\pi\geq 2$, then $\h{\Theta}(\vphi)$ is not factorizable. As shown in \cite[Theorem 6.1]{HM}, tensoring a bistochastic channel increases its $cb$-norm distance to the set of factorizable maps on $M_d(\C)$. Hence, we immediately obtain the following.

\begin{thm}\label{notF} Let $G$ be a finite group. If $\vphi\in\PO$ is maximally extreme with a cyclic representation $(\pi,\xi)$ satisfying $d_\pi\geq 2$, then $\h{\Theta}(\vphi)$ does not satisfy the asymptotic quantum Birkhoff conjecture.
\end{thm}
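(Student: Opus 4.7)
The plan is essentially to assemble the ingredients already catalogued in the paragraph preceding the theorem into a clean deduction that $\h{\Theta}(\vphi)$ stays at uniformly positive $cb$-distance from the factorizable maps (and hence from the random unitary channels) after any number of tensor powers. First, I would verify that $\h{\Theta}(\vphi)$ is an extreme point of $\mc{CP}(M_{\abs{G}}(\C))$, not merely of $\mc{BIS}_{\abs{G}}$. Maximal extremality gives the latter by definition, and the representation (\ref{dualchannel}) shows that the Kraus operators $M_{\xi_i}$ of $\h{\Theta}(\vphi)$ are mutually commuting normal operators (they are multiplications by scalar functions). Under this commuting-Kraus hypothesis, the result \cite[Corollary 1]{LS} identifies extremality in $\mc{BIS}_d$ with extremality in $\mc{CP}(M_d(\C))$, so we obtain $\h{\Theta}(\vphi)\in\mc{E}(\mc{CP}(M_{\abs{G}}(\C)))$.

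Next, I would invoke \cite[Corollary 2.3]{HM}: any bistochastic channel that is extreme in $\mc{CP}(M_d(\C))$ fails to be factorizable, \emph{provided} it is not itself a unitary conjugation (a trivial factorizable map). Here the hypothesis $d_\pi\geq 2$ is essential: by the proposition characterizing when $\h{\Theta}(\vphi)$ is a unitary conjugation, this happens exactly when $\vphi$ is a character, i.e.\ when $d_\pi=1$. Since we assume $d_\pi\geq 2$, $\h{\Theta}(\vphi)$ is not a unitary conjugation, and therefore is non-factorizable.

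To pass from a single copy to all tensor powers, I would apply \cite[Theorem 6.1]{HM}, which asserts that the $cb$-distance from $\Phi^{\otimes n}$ to the set of factorizable maps on $M_d(\C)^{\otimes n}$ is non-decreasing in $n$ (and in particular remains bounded below by the distance for $n=1$, which is strictly positive by the previous step). Because every random unitary channel is factorizable, $\mc{RU}(M_{\abs{G}}(\C)^{\otimes n})$ sits inside the factorizable maps, so
\begin{equation*}
\norm{\h{\Theta}(\vphi)^{\otimes n}-\mc{RU}(M_{\abs{G}}(\C)^{\otimes n})}_{cb}\geq\mathrm{dist}_{cb}(\h{\Theta}(\vphi),\mathrm{Fact})>0
\end{equation*}
for every $n$, directly contradicting (\ref{asqb}).

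The only genuinely delicate point is confirming that the commuting-Kraus form of $\h{\Theta}(\vphi)$ actually puts it in the scope of \cite[Corollary 1]{LS}; everything else is bookkeeping. Once that identification is made, Haagerup--Musat's two results do all the heavy lifting, and no further computation with the specific $\vphi$ is needed beyond the hypothesis of maximal extremality and $d_\pi\geq 2$.
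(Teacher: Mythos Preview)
Your proposal is correct and follows essentially the same route as the paper: maximal extremality plus the commuting-Kraus criterion \cite[Corollary 1]{LS} gives extremality in $\mc{CP}(M_{\abs{G}}(\C))$, then \cite[Corollary 2.3]{HM} yields non-factorizability (with $d_\pi\geq 2$ ruling out the unitary-conjugation case), and \cite[Theorem 6.1]{HM} propagates the positive $cb$-distance to all tensor powers. Your write-up is in fact slightly more explicit than the paper's on the role of $d_\pi\geq 2$ and on the inclusion $\mc{RU}\subseteq\mathrm{Fact}$, but the argument is the same.
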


\begin{remark} It was shown in \cite{Ri} that real correlation matrices give rise to factorizable Schur maps. Thus, real-valued positive definite functions are not maximally extreme, and one must consider complex-valued functions in order to apply Theorem \ref{notF}.\end{remark}

By construction it follows that the positive definite function from Example \ref{S_3} is maximally extreme, thereby giving a concrete counter-example to the AQBC. The associated correlation matrix is given by

\begin{equation*}
C_{\vphi}=\begin{pmatrix} 1 & -\frac{1}{2}+\frac{2}{5}\sqrt{3}i & -\frac{1}{2}-\frac{2}{5}\sqrt{3}i & 0 & -\frac{3}{10}\sqrt{3} & \frac{3}{10}\sqrt{3}\\
-\frac{1}{2}-\frac{2}{5}\sqrt{3}i & 1 & -\frac{1}{2}+\frac{2}{5}\sqrt{3}i & \frac{3}{10}\sqrt{3} & 0 & -\frac{3}{10}\sqrt{3}\\
-\frac{1}{2}+\frac{2}{5}\sqrt{3}i & -\frac{1}{2}-\frac{2}{5}\sqrt{3}i & 1 & -\frac{3}{10}\sqrt{3} & \frac{3}{10}\sqrt{3} & 0\\
0 & \frac{3}{10}\sqrt{3} & -\frac{3}{10}\sqrt{3} & 1 & -\frac{1}{2}-\frac{2}{5}\sqrt{3}i & -\frac{1}{2}+\frac{2}{5}\sqrt{3}i\\
-\frac{3}{10}\sqrt{3} & 0 & \frac{3}{10}\sqrt{3} & -\frac{1}{2}+\frac{2}{5}\sqrt{3}i & 1 & -\frac{1}{2}-\frac{2}{5}\sqrt{3}i\\
\frac{3}{10}\sqrt{3} & -\frac{3}{10}\sqrt{3} & 0 & -\frac{1}{2}-\frac{2}{5}\sqrt{3}i & -\frac{1}{2}+\frac{2}{5}\sqrt{3}i & 1\\\end{pmatrix}.\end{equation*}

We note that, although relying on important results from \cite{HM}, our counter-examples are different from the ones considered in that paper. To see this, recall that the correlation matrix $C_{\vphi}$ has entries $\vphi(st^{-1})$, $s,t\in G$. Thus, each column contains the values of the function $\vphi$, permuted according to the group multiplication. It is easily checked that the counter-examples considered in \cite{HM} are not of this form. Moreover, we can produce counter-examples in a systematic fashion by using a geometric characterization of maximal extremity. As an illumination for the counter-example above, see Figure 1.

\subsection{Bloch Sphere Representation of Schur Maps}

Let $A$ be a $d\times d$ correlation matrix of rank $r$, and let $A=XX^*$ be its canonical decomposition, where $X\in M_{d\times r}(\C)$. Denoting by $\xi_j$ the $j^{th}$ column of $X^*$, we may study the associated Schur map $S_A$ on $M_d(\C)$ by looking at the set of rank 1 projections $\{x_{\xi_j,\xi_j} : 1\leq j\leq d\}$ on the generalized Bloch sphere \cite{Ki}. Specifically, let $\sigma:=(\sigma_1,...,\sigma_{r^2-1})$ be a vector of generators for $SU(r)$. Then any state $\rho\in M_r(\C)$ can be uniquely characterized by
\begin{equation*}\rho=\frac{1}{r}I_r+\frac{1}{2}v_\rho\cdot\sigma\end{equation*}
with $v_\rho\in\R^{r^2-1}$ satisfying $\norm{v_\rho}\leq\sqrt{\frac{2(r-1)}{r}}$, and $v_\rho\cdot\sigma$ denoting the pointwise dot product of $v_\rho$ and $\sigma$. The correlation matrix $A$ then yields $d$ vectors $v_j$ on the Bloch Sphere corresponding to the pure states $x_{\xi_j,\xi_j}$, and we gain insight into the bistochastic channel $S_A$ by studying the geometry of the resulting set. One result of particular interest is the following. It may be proved by combining Theorem \ref{corr} with the above construction, but see \cite[\S II]{HS} for a detailed argument.

\begin{thm}\label{BSR} Let $A$ be a $d\times d$ correlation matrix of rank $r$. Then $A\in\mc{E}(\C_d)$ if and only if the affine span of $\{v_j\mid1\leq j\leq d\}$ is $\R^{r^2-1}$.\end{thm}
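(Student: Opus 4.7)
The plan is to reduce the statement to Theorem~\ref{corr} using the Bloch parameterization of pure states. The key observation is that since $A$ has unit diagonal, $\norm{\xi_j}^2 = A_{jj} = 1$ for every $j$, so each $x_{\xi_j,\xi_j}$ is a rank-one orthogonal projection on $\C^r$ (a pure state), and thus can be written as $x_{\xi_j,\xi_j} = \tfrac{1}{r}I_r + \tfrac{1}{2}v_j\cdot\sigma$ for a uniquely determined $v_j\in\R^{r^2-1}$. By Theorem~\ref{corr}, the result will follow once one establishes
\[
\mathrm{span}_\R\{x_{\xi_j,\xi_j} : 1\leq j\leq d\} = \mc{H}_r \ \Longleftrightarrow \ \mathrm{aff}\,\{v_j : 1\leq j\leq d\} = \R^{r^2-1}.
\]

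Since $\{I_r,\sigma_1,\ldots,\sigma_{r^2-1}\}$ is a real basis of $\mc{H}_r$, the assignment $\alpha I_r + \tfrac{1}{2}w\cdot\sigma \mapsto (\alpha,w)$ is an $\R$-linear isomorphism $\mc{H}_r \cong \R\oplus\R^{r^2-1}$ under which $x_{\xi_j,\xi_j}$ corresponds to $(1/r, v_j)$. The equivalence above then reduces to the elementary linear-algebraic fact that vectors $(c,v_j)\in\R\oplus\R^{r^2-1}$ sharing a common nonzero first coordinate $c$ linearly span the whole space precisely when the $v_j$'s affinely span $\R^{r^2-1}$.

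For the $(\Leftarrow)$ direction, if $\mathrm{aff}\{v_j\} = \R^{r^2-1}$ then the translates $\{v_j - v_1\}_{j\geq 2}$ linearly span $\R^{r^2-1}$, so the differences $x_{\xi_j,\xi_j} - x_{\xi_1,\xi_1} = \tfrac{1}{2}(v_j - v_1)\cdot\sigma$ span the traceless part of $\mc{H}_r$; together with $x_{\xi_1,\xi_1}$, whose component along $I_r$ is nonzero, these exhaust $\mc{H}_r$. For $(\Rightarrow)$, suppose each $\sigma_k$ is written as $\sum_j c_j^{(k)} x_{\xi_j,\xi_j}$: taking traces forces $\sum_j c_j^{(k)} = 0$, while the traceless components yield $\sum_j c_j^{(k)} v_j = 2e_k$. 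Hence every standard basis vector of $\R^{r^2-1}$ lies in the linear span of the differences $\{v_j - v_{j'}\}$, which is exactly the affine span condition on $\{v_j\}$.

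The only real subtlety is carefully distinguishing linear span from affine span when transferring between the two pictures; apart from this bookkeeping, the argument adds nothing beyond Theorem~\ref{corr} and the standard Bloch parameterization already in force.
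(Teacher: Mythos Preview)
Your proposal is correct and follows exactly the approach the paper indicates: the paper does not give a self-contained proof but merely states that the result ``may be proved by combining Theorem~\ref{corr} with the above construction'' and refers to \cite[\S II]{HS} for details. Your argument fills in precisely those details, reducing the affine-span condition on the Bloch vectors $v_j$ to the linear-span condition on the $x_{\xi_j,\xi_j}$ in $\mc{H}_r$ via the isomorphism $\mc{H}_r\cong\R\oplus\R^{r^2-1}$ induced by the basis $\{I_r,\sigma_1,\dots,\sigma_{r^2-1}\}$.
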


Applying the above to our case, if $G$ is a finite group and $\vphi=(\pi,\xi)$ is a positive definite function of norm one, we obtain $\abs{G}$ many vectors on the Bloch sphere in $\R^{d_\pi^2-1}$ corresponding to the pure states $x_{\pi(s)^*\xi,\pi(s)^*\xi}$, $s\in G$. If the affine span of the resulting vectors is all of $\R^{d_\pi^2-1}$, then we obtain a counter-example to the AQBC. For a visual example, consider the two-dimensional irreducible representation $\pi:S_3\rightarrow\mc{B}(\C^2)$ from Example \ref{S_3}. As $\xi\in\C^2$ varies, we see that the orbit of $x_{\xi,\xi}$ varies accordingly in its geometric structure:
\begin{figure}[h]
\centering
\includegraphics[scale=0.20]{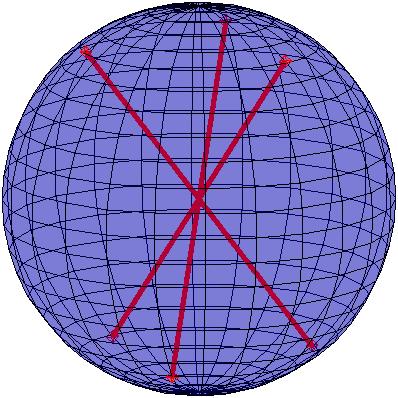}
\vskip10pt
\caption{$\vphi=(\pi,\xi)$ with $\xi=\frac{1}{\sqrt{10}}(i,3)$ is maximally extreme. Produced in Maple.}
\end{figure}
\begin{figure}[h]
\centering
\includegraphics[scale=0.20]{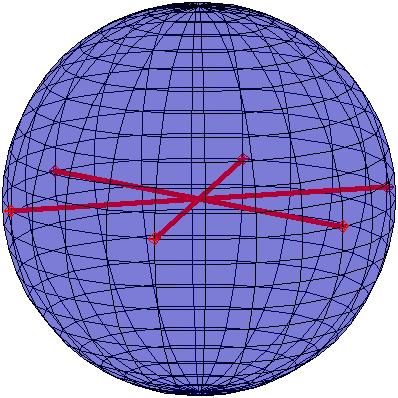}
\vskip10pt
\caption{$\vphi=(\pi,\xi)$ with $\xi=\frac{1}{\sqrt{2}}(1,i)$ is not maximally extreme. Produced in Maple.}
\end{figure}

\begin{remark} Given a finite group $G$ and $\vphi=(\pi,\xi)\in\PO$, it would be interesting to study the relationship between the distance of the Schur map $\h{\Theta}(\vphi)$ to the set of random unitaries and the volume of the corresponding orbit on the Bloch sphere. This has been done for a class of Schur maps in \cite{HS}, showing remarkable similarities.\end{remark}

In \cite[Theorem 4]{LS}, it was shown that if $\Phi\in\mc{BIS}_d$ has a representation with two linearly independent Kraus operators, then it is unitarily equivalent to a Schur map, i.e., there exists a unitary $U\in M_d(\C)$ such that $\Psi(x)=U^*\Phi(UxU^*)U$, defines a Schur map on $M_d(\C)$. Consequently, the Bloch sphere representation yields the following dichotomy.

\begin{thm}\label{bloch} If $\Phi\in\mc{BIS}_d$ has a representation with two linearly independent Kraus operators, then either $\Phi$ is extreme in $\mc{BIS}_d$, or it is random unitary.\end{thm}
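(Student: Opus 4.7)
The plan is to reduce the claim to a geometric dichotomy for rank-$2$ correlation matrices via the Landau-Streater theorem cited just above. First, since $\Phi\in\mc{BIS}_d$ admits a representation $\Phi(\cdot)=A_1(\cdot)A_1^*+A_2(\cdot)A_2^*$ with $A_1,A_2$ linearly independent, its Choi rank equals $2$. By \cite[Theorem 4]{LS}, there is a unitary $U\in M_d(\C)$ such that $\Psi(x)=U^*\Phi(UxU^*)U$ equals the Schur map $S_A$ for some $d\times d$ correlation matrix $A$; since the Choi rank is preserved by unitary equivalence and equals the rank of $A$ for a Schur map, $\mathrm{rank}\,A=2$. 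Both extremality in $\mc{BIS}_d$ and being random unitary are preserved under unitary equivalence, so it suffices to establish the dichotomy for $S_A$.

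Next, write $A=XX^*$ with $X\in M_{d\times 2}(\C)$, let $\xi_j\in\C^2$ be the $j$th column of $X^*$ (so $\|\xi_j\|^2=A_{jj}=1$), and let $v_j\in\R^3$ be the Bloch vector of the pure state $x_{\xi_j,\xi_j}$. Combining \thmref{corr2} with \thmref{BSR}, $S_A$ lies in $\mc{E}(\mc{BIS}_d)$ if and only if the affine span of $\{v_j:1\le j\le d\}$ equals $\R^3$. Assume $S_A$ is not extreme; then the $v_j$ lie in a proper affine subspace of $\R^3$, hence in some affine plane $V=\{v\in\R^3:w\cdot v=c\}$ for a unit vector $w\in\R^3$ and $c\in[-1,1]$.

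Now observe that $A_{ij}=\langle\xi_j,\xi_i\rangle$ depends only on the Gram data of the frame $(\xi_j)$ in $\C^2$, so $A$ is invariant under $\xi_j\mapsto W\xi_j$ for any $W\in SU(2)$, while the Bloch picture rotates by the image $R\in SO(3)$ of $W$ under the double cover $SU(2)\to SO(3)$. Choosing $W$ so that $R(w)=e_3$, we may assume $V=\{v_3=c\}$, whence $v_j=(\sqrt{1-c^2}\cos\theta_j,\sqrt{1-c^2}\sin\theta_j,c)$ for some $\theta_j\in\R$. Diagonalizing $\tfrac12(I+v_j\cdot\sigma)$ in the Pauli basis then gives, up to a global phase, $\xi_j=\bigl(\sqrt{(1+c)/2},\,e^{i\theta_j}\sqrt{(1-c)/2}\bigr)$, and so
\[A_{ij}=\tfrac{1+c}{2}+\tfrac{1-c}{2}\,e^{i(\theta_j-\theta_i)}.\]

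Finally, set $\chi_1=(1,\ldots,1)\in\C^d$ and $\chi_2=(e^{-i\theta_j})_{j=1}^d$; both are unimodular, so $D_k:=\mathrm{diag}(\chi_k)$ are unitaries in $M_d(\C)$ whose conjugations $D_k(\cdot)D_k^*$ are the Schur maps of the rank-one correlation matrices $\chi_k\chi_k^*$. The displayed identity above rewrites as $A=\tfrac{1+c}{2}\chi_1\chi_1^*+\tfrac{1-c}{2}\chi_2\chi_2^*$, with both weights nonnegative since $|c|\le1$, yielding
\[S_A=\tfrac{1+c}{2}\,D_1(\cdot)D_1^*+\tfrac{1-c}{2}\,D_2(\cdot)D_2^*,\]
a convex combination of unitary conjugations, hence random unitary. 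The step requiring the most care is the rotation argument in the third paragraph: one must verify that conjugation of the frame by $W\in SU(2)$ leaves $A$ (hence $S_A$) unchanged while inducing the correct $SO(3)$-rotation on the Bloch sphere, so that coplanarity of $\{v_j\}$ can be placed in canonical form without altering the channel under analysis.
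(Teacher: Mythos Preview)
Your proof is correct and follows the same overall architecture as the paper's: reduce to a Schur map via \cite[Theorem 4]{LS}, apply \thmref{BSR} and \thmref{corr2} to equate extremality with the affine span condition, and then argue that coplanarity of the Bloch vectors forces the Schur map to be random unitary.

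Where you diverge is in the execution of this last step. The paper keeps the frame $(\xi_j)$ fixed and instead introduces a family of commuting rotations $R_j\in SU(2)$ about the common normal to the plane, each taking $x_{\xi_1,\xi_1}$ to $x_{\xi_j,\xi_j}$; it then simultaneously diagonalizes the $R_j$ in a basis $\{e_1,e_2\}$ and assembles two diagonal unitaries $M_1,M_2$ from the resulting eigenvalues and the phases $\alpha_j$ in $R_j\xi_1=\alpha_j\xi_j$, verifying the decomposition entrywise. Your route is more direct: you apply a single global $W\in SU(2)$ to rotate the plane into $\{v_3=c\}$, read off explicit coordinates $\xi_j=\bigl(\sqrt{(1+c)/2},\,e^{i\theta_j}\sqrt{(1-c)/2}\bigr)$, and observe immediately that $A=\tfrac{1+c}{2}\chi_1\chi_1^*+\tfrac{1-c}{2}\chi_2\chi_2^*$ as a convex combination of rank-one unimodular correlation matrices. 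This sidesteps the simultaneous diagonalization and the eigenvalue/phase bookkeeping entirely, at the cost of one coordinate choice; the paper's argument is more intrinsic but heavier. Both yield the same pair of diagonal unitaries (up to the global frame change), and your justification that the frame rotation leaves $A$ invariant is exactly the point that makes the shortcut legitimate.
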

\begin{proof}Let $\Psi$ be the bistochastic Schur map obtained from $\Phi$ via unitary equivalence. As $\Psi$ is also represented with two linearly independent Kraus operators, the rank of the associated correlation matrix $A=XX^*$ is two, and we obtain $d$ vectors $v_j$ on the Bloch sphere in $\R^3$ representing the pure states $x_{\xi_j,\xi_j}$, where $\xi_j$ is the $j^{th}$ column of $X^*$. If the affine span of the $v_j$'s is all of $\R^3$, then $\Psi$ is extreme by Theorem \ref{BSR} and Theorem \ref{corr2}. If not, then all the vectors $v_j$ lie on the same plane and there exists rotations $R_j\in SU(2)$ satisfying $R_jx_{\xi_1,\xi_1}R_j^*=x_{\xi_j,\xi_j}$ and whose action on the Bloch sphere takes $v_1$ to $v_j$ for all $1\leq j\leq d$. Indeed, if $\theta_j$ is the angle between $v_1$ and $v_j$, and $n=(n_x,n_y,n_z)$ is the unit normal to the plane, then
\begin{equation*}R_j=\cos\bigg(\frac{\theta_j}{2}\bigg)I_2-i\sin\bigg(\frac{\theta_j}{2}\bigg)\bigg(n_xX+n_yY+n_zZ\bigg),\end{equation*}
where $X,Y,Z$ are the $2\times 2$ Pauli matrices \cite[\S8.3]{NC}. As these rotations commute, there is a common orthonormal eigenbasis $\{e_1,e_2\}$ of $\C^2$. Let $\lm^1_j$ and $\lm^2_j$ denote the eigenvalues of $R_j$ corresponding to $e_1$ and $e_2$, respectively, for $1\leq j\leq d$, and define $M_1=\mathrm{diag}(\alpha_1\overline{\lm^1_1},...,\alpha_d\overline{\lm^1_d})$ and $M_2=\mathrm{diag}(\alpha_1\overline{\lm^2_1},...,\alpha_d\overline{\lm^2_d})$, where $\alpha_j$ are those elements on the unit circle such that $R_j\xi_1=\alpha_j\xi_j$, $1\leq j\leq d$. Clearly, $M_1$ and $M_2$ are both unitaries, and if $\{b_1,...,b_d\}$ is the standard basis for $\C^d$ we have
\begin{align*}\la\Psi(\rho)b_j,b_i\ra&=\la\xi_j,\xi_i\ra\la\rho b_j,b_i\ra\\
                                     &=\bigg(\la\xi_j,e_1\ra\la e_1,\xi_i\ra+\la\xi_j,e_2\ra\la e_2,\xi_i\ra\bigg)\la\rho b_j,b_i\ra\\
                                     &=\bigg(\overline{\alpha_j}\alpha_i\la\alpha_j\xi_j,e_1\ra\la e_1,\alpha_i\xi_i\ra+\overline{\alpha_j}\alpha_i\la\alpha_j\xi_j,e_2\ra\la e_2,\alpha_i\xi_i\ra\bigg)\la\rho b_j,b_i\ra\\
                                     &=\bigg(\overline{\alpha_j}\alpha_i\la\xi_1,R_j^*e_1\ra\la R_i^*e_1,\xi_1\ra+\overline{\alpha_j}\alpha_i\la\xi_1,R_j^*e_2\ra\la R_i^*e_2,\xi_1\ra\bigg)\la\rho b_j,b_i\ra\end{align*}\newpage
                                     \begin{align*}
                                     &=\bigg(\abs{\la\xi_1,e_1\ra}^2\overline{\alpha_j}\lm^1_j\alpha_i\overline{\lm^1_i}+\abs{\la\xi_1,e_2\ra}^2\overline{\alpha_j}\lm^2_j\alpha_i\overline{\lm^2_i}\bigg)\la\rho b_j,b_i\ra\\
                                     &=\abs{\la\xi_1,e_1\ra}^2\la M_1\rho M_1^*b_j,b_i\ra+\abs{\la\xi_1,e_2\ra}^2\la M_2\rho M_2^*b_j,b_i\ra.\end{align*}
Therefore, $\Psi(\rho)=\abs{\la\xi_1,e_1\ra}^2M_1\rho M_1^*+\abs{\la\xi_1,e_2\ra}^2M_2\rho M_2^*$ for all $\rho\in M_d(\C)$, and $\Psi$ is random unitary. Since the inclusions $\Psi\in\mc{RU}(M_d(\C))$ and $\Psi\in\mc{E}(\mc{BIS}_d)$ are preserved under unitary equivalence, the same results hold true for our original channel $\Phi$.\end{proof}

\begin{remark}\label{notRU} Applying this result to the positive definite function $\vphi$ in Figure 2, we see that the resulting channel is random unitary. However, coming from an irreducible representation of $S_3$ of dimension $2$, $\vphi$ cannot lie in the convex hull of characters. The characterization of positive definite functions $\vphi$ for which $\h{\Theta}(\vphi)$ is random unitary is therefore more subtle then at first glance.\end{remark}

As a final application of the geometric representation, we will provide a simplified expression for the quantum capacity of Schur maps. Although this has been studied quite extensively \cite{DS,WP,DBF,HHLZ}, especially in the memory setting, the connection with the geometric representation may provide insight into capacities of other quantum channels. We present the result for our class of channels but it equally holds for all Schur maps by the same argument.

Given a finite-dimensional completely positive trace preserving (CPTP) map $\Phi:M_d(\C)\rightarrow M_d(\C)$ of the form $\Phi(x)=\sum_{i=1}^na_ixa_i^*$, $x\in M_d(\C)$, the \e{complement} of $\Phi$ is the CPTP map $\Phi^c:M_d(\C)\rightarrow M_n(\C)$ given by $\Phi^c(x)=\sum_{j=1}^db_jxb_j^*$, where $x\in M_d(\C)$ and $[b_j]_{ik}=[a_i]_{jk}$ \cite{KMNR}. In this more general context, the operators $(a_i)_{i=1}^n$ and $(b_j)_{j=1}^d$ are still called Kraus operators. The \e{quantum capacity} of $\Phi$ is then
\begin{equation*}Q(\Phi)=\lim_{n\rightarrow\infty}\frac{1}{n}\sup_{\rho}J(\Phi^{\ten n},\rho),\end{equation*}
where $J(\Phi,\rho)=S(\Phi(\rho))-S(\Phi^c(\rho))$ is the \e{coherent information}, and $S(\rho)=-\tr(\rho\log\rho)$ is the von Neumann entropy of the state $\rho$ \cite{De,L,Shor2}. Physically, $Q(\Phi)$ may be viewed as the amount of quantum information that can be reliably transmitted by $\Phi$, asymptotically.

\begin{prop} Let $G$ be a finite group and $\vphi=(\pi,\xi)\in\PO$. Then
\begin{equation}\label{Q}Q(\h{\Theta}(\vphi))=\max_{\mu\in M_1(G)}\bigg(H(\mu)-S\bigg(\sum_{s\in G}\mu(s)x_{\pi(s)\xi,\pi(s)\xi}\bigg)\bigg),\end{equation}
where $H(\mu)=-\sum_{s\in G}\mu(s)\log(\mu(s))$ is the classical Shannon entropy of $\mu$.\end{prop}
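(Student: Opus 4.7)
The plan is to identify the Kraus operators and complementary channel of $\h{\Theta}(\vphi)$, recognize the channel as a Hadamard (hence degradable) channel to obtain a single-letter formula for the quantum capacity, and finally reduce the optimization over inputs to diagonal states. First I would use equation (\ref{dualchannel}) to note that $\h{\Theta}(\vphi)$ has pairwise commuting Kraus operators $M_{\xi_i}$, $1\le i\le d_\pi$, with $\xi_i(s)=\la e_i,\pi(s)^*\xi\ra$; each is diagonal in the orthonormal basis $\{\delta_s\}_{s\in G}$ of $\LT$, so $\h{\Theta}(\vphi)$ is the Schur multiplier on $M_{\abs{G}}(\C)$ associated to $C_\vphi$. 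A direct computation of the standard complement of this Kraus family yields
\begin{equation*}
\h{\Theta}(\vphi)^c(\rho)=\sum_{s\in G}\rho_{ss}\,x_{w_s,w_s},
\end{equation*}
where expanding $\xi$ in the orthonormal basis $\{\pi(s)e_i\}_i$ identifies $w_s=\pi(s)^*\xi=\pi(s^{-1})\xi$. Since the map $\mu\mapsto\check\mu$, $\check\mu(s)=\mu(s^{-1})$, is an entropy-preserving bijection of $M_1(G)$, the right-hand side of (\ref{Q}) is unchanged if $\pi(s)\xi$ is replaced throughout by $\pi(s)^*\xi$.

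The above formula for $\h{\Theta}(\vphi)^c$ exhibits it as the composition of the complete dephasing map $\rho\mapsto\sum_s\rho_{ss}x_{\delta_s,\delta_s}$ with the classical-to-quantum preparation $\delta_s\mapsto x_{w_s,w_s}$, so $\h{\Theta}(\vphi)^c$ is entanglement breaking and $\h{\Theta}(\vphi)$ is therefore a Hadamard channel. Hadamard channels are degradable, and for degradable channels the Devetak--Shor theorem gives the single-letter formula
\begin{equation*}
Q(\h{\Theta}(\vphi))=\sup_{\rho}J(\h{\Theta}(\vphi),\rho),
\end{equation*}
so the regularization in the definition of quantum capacity can be dropped.

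To finish, I would reduce this supremum to diagonal input states. Fixing $\rho$ with diagonal $\mu(s):=\rho_{ss}\in M_1(G)$, the diagonal density $\rho_d:=\sum_s\mu(s)x_{\delta_s,\delta_s}$ satisfies $\h{\Theta}(\vphi)^c(\rho)=\h{\Theta}(\vphi)^c(\rho_d)$, since the complement depends on $\rho$ only through its diagonal. On the other hand, every diagonal entry of $C_\vphi$ equals $\vphi(e)=1$, so the Schur map $\h{\Theta}(\vphi)$ preserves the diagonal; hence $\h{\Theta}(\vphi)(\rho)$ has diagonal $\mu$ and $\h{\Theta}(\vphi)(\rho_d)=\rho_d$, and by Schur--Horn majorization together with the Schur concavity of the von Neumann entropy, $S(\h{\Theta}(\vphi)(\rho))\le H(\mu)=S(\h{\Theta}(\vphi)(\rho_d))$. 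Consequently $J(\h{\Theta}(\vphi),\rho)\le J(\h{\Theta}(\vphi),\rho_d)$, the supremum is attained at diagonal states (indeed a maximum, by compactness of $M_1(G)$ and continuity of entropy in finite dimensions), and evaluating $J(\h{\Theta}(\vphi),\rho_d)$ together with the substitution $\mu\leftrightarrow\check\mu$ delivers (\ref{Q}). The main obstacle is the careful identification of the complement channel and the verification that it is entanglement breaking; once the Hadamard structure is in hand, the Devetak--Shor single-letter formula and the Schur--Horn reduction to classical inputs finish the proof.
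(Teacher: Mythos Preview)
Your proof is correct and reaches the same conclusion as the paper, but the reduction to diagonal inputs follows a genuinely different route. The paper computes the complement essentially as you do, notes that Schur maps are degradable, and then invokes the Wolf--P\'erez-Garc\'ia covariance argument: it writes the conditional expectation $E$ onto $\LI$ as a random unitary $E(\rho)=\frac{1}{|G|}\sum_t M_{\chi^t}^*\rho M_{\chi^t}$, observes that $\h{\Theta}(\vphi)^c(M_{\chi^t}^*\rho M_{\chi^t})=\h{\Theta}(\vphi)^c(\rho)$ and $\h{\Theta}(\vphi)(M_{\chi^t}^*\rho M_{\chi^t})=M_{\chi^t}^*\h{\Theta}(\vphi)(\rho)M_{\chi^t}$, and then appeals to the \emph{concavity of the coherent information on degradable channels} to conclude $J(\h{\Theta}(\vphi),E(\rho))\ge J(\h{\Theta}(\vphi),\rho)$. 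By contrast, you bypass this entirely: you use that the complement depends only on the diagonal of $\rho$, and that the Schur map preserves the diagonal, so Schur--Horn majorization immediately gives $S(\h{\Theta}(\vphi)(\rho))\le H(\mu)=S(\h{\Theta}(\vphi)(\rho_d))$. Your argument is more elementary---it avoids the nontrivial concavity result for degradable channels and the random-unitary decomposition of $E$---while the paper's approach has the virtue of exhibiting an explicit symmetry principle that would generalize to other covariant situations. Your handling of $\pi(s)\xi$ versus $\pi(s)^*\xi=\pi(s^{-1})\xi$ via the entropy-preserving bijection $\mu\leftrightarrow\check\mu$ is cleaner than the paper's, which absorbs this together with a transpose into a single ``entropy-preserving'' remark.
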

\begin{proof} The complement of $\h{\Theta}(\vphi)$ is the map $\h{\Theta}(\vphi)^c:\BLT\rightarrow M_{d_\pi}(\C)$ with Kraus operators $(N_s)_{s\in G}\subseteq M_{d_\pi,|G|}(\C)$ satisfying $[N_s]_{it}=[M_{\xi_i}]_{st}=\delta_{s,t}\xi_i(s)=\delta_{s,t}\la e_i,\pi(s)\xi\ra$, where $(e_i)_{i=1}^{d_\pi}$ is the standard orthonormal basis of $H_\pi$, and $\xi_i$ are as in equation (\ref{dualchannel}). Thus, denoting the standard basis of $\LT$ by $(\delta_s)_{s\in G}$, for every $\rho\in\BLT$ we have
\begin{align*}\la\h{\Theta}(\vphi)^c(\rho)e_j,e_i\ra&=\sum_{s\in G}\la\rho N_s^*e_j,N_s^*e_i\ra=\sum_{s\in G}\xi_i(s)\overline{\xi_j(s)}\la\rho\delta_s,\delta_s\ra\\
                                                    &=\sum_{s\in G}\la e_i,\pi(s)^*\xi\ra\la\pi(s)^*\xi,e_j\ra\la\rho\delta_s,\delta_s\ra=\sum_{s\in G}\la x_{\pi(s)^*\xi,\pi(s)^*\xi}^te_j,e_i\ra\la\rho\delta_s,\delta_s\ra,\end{align*}
where $t$ denotes the transpose. Therefore, $\h{\Theta}(\vphi)^c(\rho)=\sum_{s\in G}\la\rho\delta_s,\delta_s\ra x_{\pi(s)^*\xi,\pi(s)^*\xi}^t$. In particular we see that if $M_\chi\in\LI$ with $|\chi(s)|=1$ for all $s\in G$, then $\h{\Theta}(\vphi)^c(M_{\chi}^*\rho M_{\chi})=\h{\Theta}(\vphi)^c(\rho)$ for every $\rho\in\BLT$.

Identifying $\LT$ with $L^2(\Z_{|G|})$, one can easily see that the conditional expectation from $\BLT$ onto $\LI$, denoted $E$, is a random unitary channel with Kraus operators $\frac{1}{\sqrt{|G|}}M_{\chi^t}$, $t\in\Z_{|G|}$. Since Schur maps are degradable, and the coherent information is concave on degradable channels \cite[Appendix B]{DS}, we may apply an argument from \cite{WP} to obtain
\begin{align*}J(\h{\Theta}(\vphi),E(\rho))&\geq\frac{1}{|G|}\sum_{t\in\Z_{|G|}}J(\h{\Theta}(\vphi),M_{\chi^t}^*\rho M_{\chi^t})\\
                                                             &=\frac{1}{|G|}\sum_{t\in\Z_{|G|}}\bigg(S(\h{\Theta}(\vphi)(M_{\chi^t}^*\rho M_{\chi^t}))-S(\h{\Theta}(\vphi)^c(\rho))\bigg)\\
                                                             &=\frac{1}{|G|}\sum_{t\in\Z_{|G|}}\bigg(S(M_{\chi^t}^*\h{\Theta}(\vphi)(\rho )M_{\chi^t})-S(\h{\Theta}(\vphi)^c(\rho))\bigg)\\
                                                             &=J(\h{\Theta}(\vphi),\rho).\end{align*}
Hence, as observed in \cite{WP} for two dimensional channels, the coherent information is maximized on diagonal input states. As these are of the form $M_{\mu}$ for some probability measure $\mu$ on $G$, and the transpose is entropy preserving, we obtain the desired formula by using the fact that $Q(\Phi)=\sup_{\rho}J(\Phi,\rho)$ for degradable channels \cite[Appendix B]{DS}.\end{proof}

\section{Duality and Further Properties}

A distinguishing property of quantum channels coming from our representations is that they have ``dual'' counterparts acting on the same space. One manifestation of this duality is the following result, where for a set $\mc{S}\subseteq\mc{CB}(\BLT)$, we denote by $\mc{S}^c$ its commutant in $\mc{CB}(\BLT)$.

\begin{thm}\cite[Theorem 5.2]{NRS}\label{comm} Let $G$ be a locally compact group. Then we have
\begin{equation*}\h{\Theta}(\Mcb)=\Theta(M(G))^c\cap\mc{CB}^{\sigma}_{\LI}(\BLT).\end{equation*}\end{thm}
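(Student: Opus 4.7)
The plan is to establish the two set inclusions separately. Three ingredients will be used throughout: (i) by Heisenberg's theorem (Proposition 2.2), the linear span of products $M_f l_s M_g$, with $f, g \in \LI$ and $s \in G$, is weak*-dense in $\BLT$; (ii) the intertwining $r_t M_f r_{t^{-1}} = M_{R_t f}$, where $R_t f(u) = f(ut)$, obtained directly from the formulas for $r_t$; and (iii) the identity $r_t l_s r_{t^{-1}} = l_s$, which follows from $l_s \in \LG = \RG'$.

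For the inclusion ``$\subseteq$'', fix $\vphi \in \Mcb$ and $\mu \in M(G)$. The representation formula $\h{\Theta}(\vphi)(x) = w^*\text{-}\sum_i M_{\xi_i} x M_{\eta_i}$ makes $\h{\Theta}(\vphi)$ manifestly a normal CB $\LI$-bimodule map, so only commutation with $\Theta(\mu)$ needs to be checked. I would evaluate both compositions on a generic $M_f l_s M_g$. Using the bimodule property of $\h{\Theta}(\vphi)$ and the identity $\h{\Theta}(\vphi)(l_s) = \vphi(s) l_s$ together with (ii) and (iii), both $\Theta(\mu) \circ \h{\Theta}(\vphi)$ and $\h{\Theta}(\vphi) \circ \Theta(\mu)$ applied to $M_f l_s M_g$ reduce to $\vphi(s) \int_G M_{R_t f}\, l_s\, M_{R_t g}\, d\mu(t)$. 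Normality of both compositions and the weak*-density in (i) then promote the identity to all of $\BLT$.

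For the inclusion ``$\supseteq$'', suppose $\Phi \in \Theta(M(G))^c \cap \mc{CB}^{\sigma}_{\LI}(\BLT)$. By the Neufang--Ruan--Spronk isomorphism (\ref{McbA(G)}), it suffices to show that $\Phi$ leaves $\LG$ globally invariant, since then $\Phi = \h{\Theta}(\vphi)$ for a unique $\vphi \in \Mcb$. Specializing the commutation $\Phi \circ \Theta(\delta_s) = \Theta(\delta_s) \circ \Phi$ to $l_t$ and invoking (iii) yields $\Phi(l_t) = r_s \Phi(l_t) r_{s^{-1}}$ for all $s, t \in G$, so $\Phi(l_t)$ commutes with every $r_s$ and thus belongs to $\RG' = \LG$. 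Combining normality of $\Phi$ with the weak*-closedness of $\LG$ and the characterization (\ref{gvNalg}) then extends this to $\Phi(\LG) \subseteq \LG$, as required.

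The main technical point is the passage of $\h{\Theta}(\vphi)$ through the weak* integral defining $\Theta(\mu)(M_f l_s M_g)$ in the first inclusion; this is handled by pairing with arbitrary trace-class functionals and invoking the normality of $\h{\Theta}(\vphi)$ together with the definition of $\Theta(\mu)$ as a weak* integral. Beyond this bookkeeping, the proof is essentially algebraic, the two representation theorems $\Theta$ and $\h{\Theta}$ entering in a symmetric manner on each side.
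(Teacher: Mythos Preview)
The paper does not supply a proof of Theorem~\ref{comm}; it is quoted from \cite[Theorem 5.2]{NRS} and used as a black box to derive Corollary~\ref{commp}. There is therefore no in-paper argument to compare your proposal against.

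Your argument itself is correct. For ``$\subseteq$'', the $\LI$-bimodule property together with $\h{\Theta}(\vphi)(l_s)=\vphi(s)l_s$ and the conjugation identities $r_t M_f r_{t^{-1}}=M_{R_t f}$, $r_t l_s r_{t^{-1}}=l_s$ reduce the commutation to a tautology on the linear span of $\{M_f l_s : f\in\LI,\ s\in G\}$; this span is weak*-dense by Heisenberg's theorem (note: it is Proposition~2.1 in the paper's numbering, not 2.2), and normality of both compositions then propagates the identity to all of $\BLT$. Your handling of the passage of $\h{\Theta}(\vphi)$ through the weak* integral via the preadjoint on $\TC$ is the standard device and is adequate here. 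For ``$\supseteq$'', you correctly identify the representation theorem (\ref{McbA(G)}) as the key reduction: once $\Phi(\LG)\subseteq\LG$ is established, membership in $\NCBLTD=\h{\Theta}(\Mcb)$ is automatic, and the invariance of $\LG$ follows from commutation with each $\Theta(\delta_s)$ exactly as you describe, via $\RG'=\LG$ and normality. This is essentially the route taken in \cite{NRS} as well.
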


\begin{cor}\label{commp} Let $G$ be a locally compact group. Then we have
\begin{equation*}\h{\Theta}(\PO)=\Theta(M_1(G))^c\cap\mc{UCP}^{\sigma}_{\LI}(\BLT).\end{equation*}\end{cor}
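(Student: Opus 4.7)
The plan is to reduce this to \thmref{comm} by characterizing which $\vphi\in\Mcb$ give rise to unital completely positive maps $\h{\Theta}(\vphi)$.

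For the forward inclusion, suppose $\vphi\in\PO$. The preceding proposition shows $\h{\Theta}(\vphi)$ is a bistochastic quantum channel, in particular UCP; by construction it is a normal $\LI$-bimodule map on $\BLT$, so \thmref{comm} applies and tells us that $\h{\Theta}(\vphi)$ commutes with all of $\Theta(M(G))$, hence in particular with $\Theta(M_1(G))$.

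For the reverse inclusion, take $\Phi\in\Theta(M_1(G))^c\cap\mc{UCP}^{\sigma}_{\LI}(\BLT)$. Every $\mu\in M(G)$ admits a Jordan decomposition as a finite complex linear combination of probability measures, so by linearity of $\Theta$ the commutation with $\Theta(M_1(G))$ extends to commutation with the full image $\Theta(M(G))$. Since UCP implies CB, \thmref{comm} yields $\Phi=\h{\Theta}(\vphi)$ for a unique $\vphi\in\Mcb$; it remains to show $\vphi\in\PO$. Unitality applied at $l_e=1$ gives $\Phi(l_e)=\vphi(e)\cdot 1=1$, so $\vphi(e)=1$. For positive definiteness, I would restrict $\Phi$ to the von Neumann subalgebra $\LG$, where it acts as the Fourier multiplier $l_s\mapsto\vphi(s)l_s$ and inherits complete positivity. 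Given $\alpha_1,\dots,\alpha_n\in\C$ and $s_1,\dots,s_n\in G$, the row vector $V=(l_{s_1},\dots,l_{s_n})$ gives $V^*V=(l_{s_i^{-1}s_j})_{i,j}\geq 0$ in $M_n(\LG)$, so applying the matrix amplification $\Phi\otimes\mathrm{id}_n$ produces $(\vphi(s_i^{-1}s_j)\,l_{s_i^{-1}s_j})_{i,j}\geq 0$. Pairing this with a suitable vector in $(\LT)^n$ (in the discrete case, $\eta=(\alpha_j\delta_{s_j^{-1}})_j$ works using $l_s\delta_t=\delta_{st}$) extracts the scalar inequality $\sum_{i,j}\overline{\alpha_i}\alpha_j\vphi(s_i^{-1}s_j)\geq 0$, establishing positive definiteness. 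Combined with $\vphi(e)=1$ this forces $\norm{\vphi}_\infty=1$, so $\vphi\in\PO$.

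The main obstacle is the extraction of scalar positive-definiteness from operator positivity of $\Phi_n(V^*V)$ when $G$ is not discrete: the vectors $\delta_{s_j^{-1}}$ no longer lie in $\LT$, and one must approximate them by $L^2$-functions concentrated in small neighborhoods of $s_j^{-1}$ and pass to a limit, exploiting continuity of $\vphi$ to conclude. Alternatively, one can short-circuit this step by appealing directly to the classical correspondence between continuous positive definite functions and completely positive Fourier multipliers on $\LG$, which sharpens Gilbert's representation by allowing $\xi=\eta$ and yields $\vphi(st^{-1})=\la\xi(t),\xi(s)\ra$ — from which positive definiteness is immediate.
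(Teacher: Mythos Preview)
Your proof is correct and shares the same overall architecture as the paper's: both use the Jordan decomposition of measures to upgrade commutation with $\Theta(M_1(G))$ to commutation with all of $\Theta(M(G))$, invoke \thmref{comm} to obtain $\Phi=\h{\Theta}(\vphi)$ for some $\vphi\in\Mcb$, and then argue that $\vphi\in\PO$. The only difference is in this last step: the paper dispatches it in one line by citing \cite[Lemma 4.1]{NRS}, whereas you reprove that lemma from scratch via the matrix $V^*V=(l_{s_i^{-1}s_j})_{i,j}$ and complete positivity. Your self-contained route is a bit more informative for a reader without \cite{NRS} at hand; the paper's is shorter.

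One remark on the ``obstacle'' you flag for non-discrete $G$: no approximation is actually required. Fix any unit vector $f\in\LT$ and set $\eta_j=\alpha_j\,l_{s_j^{-1}}f$. Then $l_{s_i^{-1}s_j}\eta_j=\alpha_j\,l_{s_i^{-1}}f$ exactly, so
\[
\bigl\langle (\vphi(s_i^{-1}s_j)\,l_{s_i^{-1}s_j})_{i,j}\,\eta,\eta\bigr\rangle
=\sum_{i,j}\overline{\alpha_i}\alpha_j\,\vphi(s_i^{-1}s_j)\,\langle l_{s_i^{-1}}f,l_{s_i^{-1}}f\rangle
=\sum_{i,j}\overline{\alpha_i}\alpha_j\,\vphi(s_i^{-1}s_j)\ge 0,
\]
and the scalar positive-definiteness drops out immediately without any limiting argument.
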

\begin{proof} The inclusion ``$\subseteq$'' follows directly from Theorem~\ref{comm}. Conversely, as $M(G)$ is the dual of the $C^*$-algebra $C_0(G)$, it is linearly spanned by its states, i.e., the probability measures. Thus, if a map $\Phi\in\mc{UCP}^{\sigma}_{\LI}(\BLT)$ commutes with $\Theta(M_1(G))$, it necessarily commutes with $\Theta(M(G))$, ensuring it lies in $\h{\Theta}(\Mcb)\cap\mc{UCP}^{\sigma}_{\LI}(\BLT)$, which, by \cite[Lemma 4.1]{NRS} implies that $\Phi\in\h{\Theta}(\PO)$.\end{proof}

From the viewpoint of quantum information, Corollary~\ref{commp} says that the dual algebras $M(G)$ and $\Mcb$ give rise to commuting quantum channels. Moreover, any $\LI$-bimodule channel commuting with channels arising from $M(G)$, is \e{necessarily} a channel arising from $\Mcb$.

When $G$ is a locally compact abelian group, this duality can be expressed directly via the Fourier transform. Indeed, if $\mc{F}:\LT\rightarrow L^2(\h{G})$ denotes the unitary isomorphism from Plancherel's theorem (see equation (\ref{Fourier})), it follows that $\mc{F}r_s=M_{\chi^s}\mc{F}$ for all $s\in G$. Moreover, the Fourier transform can be extended to $M(G)$ in such a way that $\mu\in M_1(G)$ if and only if its image $\h{\mu}\in\mc{P}_1(\h{G})$ \cite[\S4.1]{F}. Using properties of $\h{\Theta}$ one can then easily show that $\h{\Theta}(\h{\mu})(\h{x})=\mc{F}(\Theta(\mu)(\mc{F}^*\h{x}\mc{F}))\mc{F}^*$ for  $\h{x}\in\mc{B}(L^2(\h{G}))$, thus providing a unitary equivalence between $\h{\Theta}(\h{\mu})$ and $\Theta(\mu)$ (see also \cite[Theorem 5.2]{S2}). As the quantum capacity is preserved under unitary equivalence, one may therefore use equation (\ref{Q}) to evaluate the quantum capacity of $\Theta(\mu)$ for any $\mu\in M_1(G)$, when $G$ is finite. Most notably, when $G=(\Z_2)^n$ for some $n\in\N$, this provides a simple maximization to determine the quantum capacity of an arbitrary $n$-qubit bit-flip channel.

In general, any quantum channel $\Phi:M_d(\C)\rightarrow M_d(\C)$ may be viewed as a map in $\mc{CB}(\BLT)$ for a finite abelian group $G$ of size $d$. One may then define its ``Fourier transform'' $\h{\Phi}$ by the above procedure in order to gain new insight. This may be particularly useful in the case of $n$-qubit channels, where the canonical group associated to the Hilbert space is $(\Z_2)^n$.

One of the biggest open problems in quantum information theory is to provide a concrete counter-example to the minimum output entropy (MOE) conjecture, which states that
\begin{equation*}S_{\min}(\Phi\ten\Psi)=S_{\min}(\Phi)+S_{\min}(\Psi),\end{equation*}
for all (finite-dimensional) CPTP maps $\Phi$ and $\Psi$, where $S_{\min}(\Phi)$ is the infimum of $S(\Phi(\rho))$ over all states $\rho$. Using random matrix techniques, it was shown by Hastings that in general this is false \cite{H}, but the search for a concrete counter-example is still underway.

Suppose that $G$ is a finite group. Then the rank 1 projection onto the constant functions in $\LT$ lies in $\LG$. Indeed, one may verify that $\frac{1}{|G|}\sum_{s\in G}l_s$ is the required projection. As $\LG\subseteq\Htmu$ for all $\mu\in M_1(G)$, it follows that $\Theta(\mu)$ always fixes a pure state. Hence, $S_{\min}(\Theta(\mu))=\inf_{\rho}S(\Theta(\mu)(\rho))=0$. On the other hand, $\LI$ contains many rank 1 projections, and since $\LI\subseteq\tilde{\mc{H}}_\vphi$ for every $\vphi\in\PO$, the channel $\h{\Theta}(\vphi)$ always fixes a pure state. Consequently, $S_{\min}(\h{\Theta}(\vphi))=0$. Therefore, the channels $\Theta(\mu)$ and $\h{\Theta}(\vphi)$ are of little interest to the conjecture as zero MOE implies additivity with all other channels. However, the composition $\Theta(\mu)\circ\h{\Theta}(\vphi)$ often has positive MOE. For instance, taking $G=\Z_d$, we obtain a subclass Weyl-covariant channels \cite{A,DFH}, many of which have positive MOE. Indeed,
\begin{equation*}\Theta(\mu)\circ\h{\Theta}(\vphi)(\rho)=\sum_{s,t\in\Z_d}\mu(s)p(t)r_sM_{\chi^t}\rho M_{\chi^t}^*r_{s^{-1}},\end{equation*}
for $\rho\in\mc{B}(\C^d)$, where $\sum_{t\in\Z_d}p(t)\chi^t$ is the convex decomposition of $\vphi$. It is known that Weyl-covariant channels are of the form
\begin{equation}\label{WCC}\Phi(\rho)=\sum_{s,t\in\Z_n}q(s,t)r_sM_{\chi^t}\rho M_{\chi^t}^*r_{s^{-1}},\end{equation}
for an arbitrary probability measure $q\in M_1(\Z_d\times\Z_d)$ \cite[\S4]{DFH}, so we see that our composed channels are precisely the ``separable'' Weyl-covariant channels.

It would be interesting to study a generalization of Weyl-covariant channels by considering compositions of the form $\Theta(\mu)\circ\h{\Theta}(\vphi)$ for nonabelian groups $G$. One result in this direction, which is of interest in its own right, concerns the entanglement breaking properties of our channels. Recall that a finite-dimensional CPTP map $\Phi$ is \e{entanglement breaking} if $(\id\ten\Phi)(\rho)$ is separable for arbitrary states $\rho$ \cite{HSR}. These channels have many important properties, and they are one of the only classes of channels for which additivity is known to hold \cite{Shor}.

\begin{thm}\label{EB} Let $G$ be a finite group, and let $\vphi\in\PO$. Then $\h{\Theta}(\vphi)$ is entanglement breaking if and only if $\vphi=\delta_e$.\end{thm}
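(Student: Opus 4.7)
The plan is to exploit the Schur multiplier structure of $\h{\Theta}(\vphi)$ developed in Section 4 and reduce the problem to a rigidity statement for Schur channels on $M_{|G|}(\C)$. As a first step I would verify directly from (\ref{dualchannel}), using Parseval, that with respect to the basis $(\delta_s)_{s\in G}$ of $\LT$ the channel $\h{\Theta}(\vphi)$ is the Schur multiplier with correlation matrix $C_\vphi$ whose $(s,t)$-entry is $\vphi(st^{-1})$; this is exactly the matrix already associated to $\vphi$ in the discussion preceding Theorem~\ref{corr}.

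For the easy direction, the case $\vphi=\delta_e$ gives $C_\vphi=I_{|G|}$, so $\h{\Theta}(\delta_e)$ is the complete pinching channel $x\mapsto\sum_{s\in G}P_sxP_s$ with $P_s=x_{\delta_s,\delta_s}$. The rank-one Kraus operators $\{P_s\}_{s\in G}$ identify this channel as entanglement breaking via the Horodecki--Shor--Ruskai criterion in \cite{HSR}.

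For the converse, the key input is a rigidity principle: every Kraus representation of a Schur multiplier channel consists of diagonal operators in the Schur basis. Writing $C_\vphi=\sum_{k=1}^{r}v_kv_k^{*}$ by spectral decomposition furnishes a minimal Kraus representation for $\h{\Theta}(\vphi)$ by the diagonal operators $D_{v_k}$, where $D_v$ denotes the diagonal matrix with entries $v$; since any two Kraus representations of a given completely positive map are related by an isometry after padding with zero operators, every Kraus representation of $\h{\Theta}(\vphi)$ is by diagonal operators. If $\h{\Theta}(\vphi)$ is entanglement breaking then \cite{HSR} supplies a rank-one Kraus representation, and each such Kraus operator, being simultaneously diagonal and of rank one, must be a scalar multiple of some $P_s$. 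Assembling the contributions and invoking unitality (via \thmref{TP}) collapses $\h{\Theta}(\vphi)$ to the complete pinching channel, whence $\vphi(st^{-1})=0$ for all $s\neq t$ and, combined with $\vphi(e)=1$, one concludes $\vphi=\delta_e$. The main obstacle is the rigidity step; once that is in hand the remainder is a clean combination of the Horodecki--Shor--Ruskai characterization with the Schur multiplier structure already developed in this section.
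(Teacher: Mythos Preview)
Your argument is correct and takes a genuinely different route from the paper's. The paper proves the converse by applying the positive-partial-transpose criterion from \cite{HSR}: it computes $(\id\ten T\circ\h{\Theta}(\vphi))(x_{\eta,\eta})$ for the maximally entangled vector $\eta$, obtains $\sum_{s,t}\vphi(st^{-1})\,x_{\delta_s,\delta_t}\ten x_{\delta_t,\delta_s}$, and then tests positivity on vectors of the form $\delta_e\ten\delta_s+\alpha\,\delta_s\ten\delta_e$ to force $\vphi(s)=0$ for $s\neq e$. Your approach instead invokes the rank-one Kraus characterization of entanglement breaking from \cite{HSR} and combines it with the observation that the span of any Kraus family of a Schur multiplier lies in the diagonal subalgebra (since a minimal Kraus family is diagonal and all Kraus families span the same subspace). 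A rank-one diagonal operator is a multiple of some $P_s$, and unitality then pins the channel down to the complete pinching.

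Both arguments are short. The paper's PPT computation is more self-contained and needs nothing beyond a single inner-product calculation. Your structural argument, on the other hand, isolates a clean fact about Schur channels---that \emph{all} Kraus representations are diagonal---which immediately characterizes entanglement breaking for arbitrary Schur maps (not just those arising from positive definite functions on a group): a Schur channel $S_C$ is entanglement breaking if and only if $C$ is the identity correlation matrix. This is a slightly more general statement than what the theorem records, and it makes transparent why the group structure plays no essential role in this direction.
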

\begin{proof} If $\vphi=\delta_e$, then $\h{\Theta}(\vphi)$ is the conditional expectation onto $\LI$ which is known to be entanglement breaking. On the other hand, if $\h{\Theta}(\vphi)$ is entanglement breaking then $\rho:=(\id\ten T\circ\h{\Theta}(\vphi))(x_{\eta,\eta})$ is a positive operator by \cite[Theorem 4]{HSR}, where $T$ is the transpose map and $\eta=\frac{1}{\sqrt{|G|}}\sum_{s\in G}\delta_s\ten\delta_s$ is the maximally entangled vector in $L^2(G\times G)$. Hence,
\begin{equation*}0\leq\rho=\sum_{s,t\in G}\vphi(st^{-1})x_{\delta_s,\delta_t}\ten x_{\delta_t,\delta_s}.\end{equation*}
Fix $s\in G$, $s\neq e$, and let $\alpha\in\C$ be non-zero. Choosing $\xi=\delta_e\ten\delta_s+\alpha\delta_s\ten\delta_e\in L^2(G\times G)$, it follows that $\la\rho\xi,\xi\ra=\overline{\alpha}\vphi(s)+\alpha\vphi(s^{-1})=\overline{\alpha}\vphi(s)+\alpha\overline{\vphi(s)}=2\Re(\overline{\alpha}\vphi(s))\geq0$. Since $\alpha$ was arbitrary, we must have $\vphi(s)=0$. Since $\vphi(e)=1$, the result follows.\end{proof}\newpage

\begin{thm}\label{EB2} Let $G$ be a finite group, and let $h$ be the Haar measure of $G$. If $\Theta(\mu)$ is entanglement breaking for some $\mu\in M_1(G)$, then $G$ is abelian, and $\mu=h$.\end{thm}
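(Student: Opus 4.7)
The plan is to first show that $G$ must be abelian and then deduce $\mu=h$ from \thmref{EB} via the Fourier duality established earlier in this section. Suppose for contradiction that $G$ is nonabelian, so that some irreducible representation $\pi$ of $G$ satisfies $d_\pi\geq 2$. As recalled in \S 3, $\LT$ contains an invariant subspace $V_\pi\ten W_\pi$ with $\dim V_\pi=\dim W_\pi=d_\pi$, on which $\LG$ acts as $M_{d_\pi}\ten I_{d_\pi}$ and the right regular representation restricts to $r_s|_{V_\pi\ten W_\pi}=I_{V_\pi}\ten\tilde\pi(s)$ for some unitary representation $\tilde\pi$ of $G$ on $W_\pi$; in particular, $\Theta(\mu)$ preserves this block. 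Fix a unit vector $w_0\in W_\pi$ and an orthonormal basis $\{v_i\}_{i=1}^{d_\pi}$ of $V_\pi$, and set $\xi_i:=v_i\ten w_0\in\LT$. A direct computation yields $\Theta(\mu)(x_{\xi_i,\xi_j})=x_{v_i,v_j}\ten\sigma$, where $\sigma:=\sum_{s\in G}\mu(s)\tilde\pi(s)x_{w_0,w_0}\tilde\pi(s)^*$ is a state on $W_\pi$. Setting $\psi:=\frac{1}{\sqrt{d_\pi}}\sum_i\xi_i\ten\xi_i\in\LT\ten_2\LT$, one then obtains
\begin{equation*}(\id\ten\Theta(\mu))(x_{\psi,\psi})=\frac{1}{d_\pi}\sum_{i,j}(x_{v_i,v_j}\ten x_{w_0,w_0})\ten(x_{v_i,v_j}\ten\sigma).\end{equation*}
If $\Theta(\mu)$ were entanglement breaking, this state would be separable in $\BLT\ten\BLT$. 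Tracing out the two $W_\pi$ tensor factors (one on each side of the bipartition) preserves separability and produces $\frac{1}{d_\pi}\sum_{i,j}x_{v_i,v_j}\ten x_{v_i,v_j}$, the maximally entangled state on $V_\pi\ten V_\pi$, which is not separable when $d_\pi\geq 2$. This contradiction forces $d_\pi=1$ for every $\pi$, so $G$ is abelian.

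With $G$ abelian, the Plancherel isomorphism $\mc F:\LT\to L^2(\h{G})$ provides the unitary equivalence $\h{\Theta}(\h{\mu})=\mc F\circ\Theta(\mu)\circ\mc F^*$ recorded just before this theorem, where $\h{\mu}\in\mc{P}_1(\h{G})$ by Bochner's theorem. Since entanglement breaking is preserved under unitary conjugation, $\h{\Theta}(\h{\mu})$ is also entanglement breaking, and applying \thmref{EB} to the finite group $\h{G}$ forces $\h{\mu}=\delta_{e_{\h{G}}}$; that is, $\h{\mu}(\chi)=0$ for every nontrivial character $\chi\in\h{G}$. Fourier inversion on finite abelian groups then yields $\mu(s)=|G|^{-1}$ for all $s\in G$, so $\mu$ is the normalized Haar measure $h$.

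The main obstacle is the construction in the first step: one has to locate a bipartite subsystem of $\LT\ten_2\LT$ on which $\Theta(\mu)$ restricts to a product channel of the form $\id\ten\Psi$. The Peter--Weyl block supplies exactly this structure, because $\LG\subseteq\Htmu$ sits pointwise on the left tensor factor $V_\pi$ while $\Theta(\mu)$ merely twirls the multiplicity factor $W_\pi$. This product form is what allows the partial trace over $W_\pi$ to wipe away the twirl while preserving the maximally entangled correlation on $V_\pi\ten V_\pi$. Once $G$ is known to be abelian, the remaining steps using \thmref{EB} and Fourier inversion are straightforward.
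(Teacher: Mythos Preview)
Your argument is correct and takes a genuinely different route from the paper's proof.

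The paper proceeds via the PPT criterion: it first observes (using \cite[Theorem~6]{HSR}) that $\mu$ must have full support, so the convolution powers $\mu^n$ converge to $h$ and hence $\Theta(h)$ is entanglement breaking as well. It then computes $(\id\ten T\circ\Theta(h))(x_{\eta,\eta})$ for the maximally entangled vector $\eta$ on $L^2(G\times G)$, and extracts from its positivity that the characteristic function of $\{(g,g^{-1}):g\in G\}$ is positive definite on $G\times G$; this forces the set to be a subgroup, hence $G$ is abelian. Your approach bypasses both the reduction to $h$ and the partial-transpose computation: you use the Peter--Weyl block $V_\pi\ten W_\pi$ to exhibit, directly for the given $\mu$, a bipartite input whose output retains the maximally entangled state on $V_\pi\ten V_\pi$ after a local partial trace. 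This is more conceptual and makes transparent \emph{why} nonabelian $G$ obstructs entanglement breaking: $\Theta(\mu)$ acts trivially on the $\LG$-factor $V_\pi$ and only twirls the multiplicity space $W_\pi$, so entanglement encoded in $V_\pi$ survives intact. The paper's route, on the other hand, is more self-contained in that it does not invoke the block structure of the regular representation beyond what is stated, and it yields the explicit Choi-matrix formula $\frac{1}{|G|^2}\sum_{s,t}x_{\delta_s,\delta_t}\ten l_{ts^{-1}}$, which may be of independent interest. For the second step, both arguments coincide: Fourier duality transports the problem to $\h G$ and \thmref{EB} finishes it.

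One minor remark: the identity you write as $\h{\Theta}(\h\mu)=\mc F\circ\Theta(\mu)\circ\mc F^*$ should be read as conjugation by $\mathrm{Ad}_{\mc F}$, i.e., $\h{\Theta}(\h\mu)(\h x)=\mc F\,\Theta(\mu)(\mc F^*\h x\,\mc F)\,\mc F^*$, as recorded earlier in this section; this is what you need for the unitary equivalence, and it is indeed what the paper states.
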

\begin{proof} Suppose $\Theta(\mu)$ is entanglement breaking. It follows from \cite[Theorem 6]{HSR} that the support of $\mu$ is all of $G$. In this case the convolution powers $\mu^n$, $n\geq 1$, converge to $h$ in the norm topology of $M(G)$, so that $\Theta(\mu^n)$ converges to $\Theta(h)$ in $cb$-norm. Since the convex set of separable states is norm closed in $\BLT$, it follows that $(\id\ten\Theta(h))(\rho)=\lim_n(\id\ten\Theta(\mu^n))(\rho)$ is separable for any $\rho$. Thus, $\Theta(h)$ is entanglement breaking.

Let $\eta=\frac{1}{\sqrt{|G|}}\sum_{s\in G}\delta_s\ten\delta_s$ be the maximally entangled vector in $L^2(G\times G)$. Then by \cite[Theorem 4]{HSR} we know that $(\id\ten T\circ\Theta(h))(x_{\eta,\eta})$ is a positive operator, where $T$ again denotes the transpose map. Using the fact that $r_s\delta_t=\delta_{ts^{-1}}$ and $l_s\delta_t=\delta_{st}$ for all $s,t\in G$, we obtain
\begin{align*}0\leq(\id\ten T\circ\Theta(h))(x_{\eta,\eta})&=\frac{1}{|G|^2}\sum_{s,t,g\in G}(\id\ten T)(x_{\delta_s,\delta_t}\ten r_gx_{\delta_s,\delta_t}r_{g^{-1}})\\
                                                      &=\frac{1}{|G|^2}\sum_{s,t,g\in G}(\id\ten T)(x_{\delta_s,\delta_t}\ten l_sx_{\delta_{g^{-1}},\delta_{g^{-1}}}l_{t^{-1}})\\
                                                      &=\frac{1}{|G|^2}\sum_{s,t\in G}(\id\ten T)(x_{\delta_s,\delta_t}\ten l_{st^{-1}})\\
                                                      &=\frac{1}{|G|^2}\sum_{s,t\in G}x_{\delta_s,\delta_t}\ten l_{ts^{-1}}.\end{align*}
Now, observe that for any finite group $H$, an element $l(\vphi)=\sum_{s\in H}\vphi(s)l_s\in\mc{L}(H)$ is a positive operator if and only if $\vphi\in\mc{P}(H)$. Indeed, if $\xi\in\LT$, then
\begin{equation*}\la l(\vphi)\xi,\xi\ra=\sum_{s,t\in H}\vphi(s)\xi(s^{-1}t)\overline{\xi(t)}=\sum_{s,t\in H}\vphi(ts^{-1})\xi(s)\overline{\xi(t)}.\end{equation*}
Hence, positivity of $(\id\ten T\circ\Theta(h))(x_{\eta,\eta})$ ensures that $\tr((\id\ten T\circ\Theta(h))(x_{\eta,\eta})l(\vphi))\geq0$ for every $\vphi\in\mc{P}(G\times G)$. Therefore,
\begin{equation*}0\leq\frac{1}{\abs{G}^2}\sum_{s,t,g,h\in G}\vphi(g,h)\la l_g\delta_s,\delta_t\ra\tr(l_{ts^{-1}h})=\frac{1}{|G|}\sum_{g,h\in G}\vphi(g,h)\tr(l_{gh})=\sum_{g\in G}\vphi(g,g^{-1}).\end{equation*}
Taking $\vphi(g,h)=\la l_{(g,h)}\xi,\xi\ra$, $g,h\in G$, where $\xi\in L^2(G\times G)$, we see that $\sum_{g}l_{(g,g^{-1})}$ must be a positive operator. But as $\sum_{g}l_{(g,g^{-1})}=l(\chi)$ where $\chi$ is the characteristic function of $G\times G^{-1}:=\{(g,g^{-1})\mid g\in G\}$, this implies that $\chi$ is positive definite of norm 1. In particular, $G_{\chi}=\{g\in G\times G\mid\chi(g)=1\}=G\times G^{-1}$ is a subgroup of $G\times G$, which is the case if and only if $G$ is abelian.

It remains to show that $\mu$ is indeed equal to $h$. As $G$ is abelian, we may apply the Fourier transform to obtain the dual channel $\h{\Theta}(\h{\mu})$. As a consequence of \cite[Theorem 4]{HSR}, unitary equivalence preserves entanglement breaking, hence $\h{\Theta}(\h{\mu})$ is an entanglement breaking channel, and by Theorem \ref{EB}, $\h{\mu}=\delta_e=\h{h}$. Thus, by injectivity of the Fourier transform, $\mu=h$.\end{proof}

\begin{remark} As there exists a notion of entanglement breaking in infinite dimensions \cite{HSW}, it would be interesting to study the extent to which these two theorems generalize.\end{remark}

Let $G$ be a finite group with Haar measure $h$. Then $\h{\Theta}(\delta_e)$ and $\Theta(h)$ are conditional expectations of $\BLT$ onto $\LI$ and $\LG$, respectively. By composition, we can therefore restrict our quantum channels to the aforementioned subalgebras to obtain new examples of channels with positive MOE.\newpage

\begin{prop} Let $G$ be a finite group, and let $\mu\in M_1(G)$. Then $S_{\min}(\Theta(\mu)\circ\h{\Theta}(\delta_e))=H(\mu)$, the Shannon entropy of $\mu$.\end{prop}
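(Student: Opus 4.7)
The plan is to reduce the quantum minimization over all input states to a classical optimization problem over probability measures on $G$, and then close it with concavity of Shannon entropy.

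First, by Theorem~\ref{EB} the map $\h{\Theta}(\delta_e)$ is the conditional expectation of $\BLT$ onto $\LI$, which carries any state $\rho\in\TC$ to the diagonal state $M_g$ with $g(s)=\la\rho\delta_s,\delta_s\ra$, a probability measure in $M_1(G)$. Conversely every such $M_g$ with $g\in M_1(G)$ is itself a diagonal state fixed by $\h{\Theta}(\delta_e)$ and arises this way. Hence minimizing over arbitrary input states is the same as minimizing over classical inputs:
\begin{equation*}
S_{\min}\bigl(\Theta(\mu)\circ\h{\Theta}(\delta_e)\bigr) \;=\; \inf_{g\in M_1(G)} S\bigl(\Theta(\mu)(M_g)\bigr).
\end{equation*}

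Second, I would invoke equation (\ref{fdotmu}) to rewrite $\Theta(\mu)(M_g)=M_{\mu\ast g}$, which is again a diagonal state. Its von Neumann entropy therefore equals the Shannon entropy $H(\mu\ast g)$ of the probability measure $\mu\ast g\in M_1(G)$, so the computation reduces to evaluating $\min_{g\in M_1(G)} H(\mu\ast g)$.

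Third, I would prove the classical inequality $H(\mu\ast g)\ge H(\mu)$ together with attainment at a point mass. Performing the change of variable $u=st$ in (\ref{fdotmu}) yields $\mu\ast g(s)=\sum_{u\in G} g(u)\,\mu(s^{-1}u)$. For each fixed $u\in G$ the map $s\mapsto s^{-1}u$ is a bijection of $G$, so the probability measure $\mu_u(s):=\mu(s^{-1}u)$ satisfies $H(\mu_u)=H(\mu)$. Since $\mu\ast g=\sum_{u}g(u)\mu_u$ is a convex combination of measures all of entropy $H(\mu)$, concavity of the Shannon entropy gives $H(\mu\ast g)\ge\sum_{u}g(u)H(\mu_u)=H(\mu)$. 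Choosing $g=\delta_e$ yields $\mu\ast\delta_e(s)=\mu(s^{-1})$, whose entropy is again $H(\mu)$, so the bound is saturated.

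The only step requiring a genuine idea is identifying the convolution $\mu\ast g$ as a convex combination of entropy-preserving reparametrizations of $\mu$; once this is observed, concavity of $H$ and evaluation at $g=\delta_e$ complete the argument immediately.
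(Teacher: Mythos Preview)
Your proof is correct and follows essentially the same line as the paper's: both reduce to diagonal inputs, recognize the output as a convex combination of entropy-preserving translates of a fixed state, and apply concavity with attainment at $g=\delta_e$. The only cosmetic difference is that the paper phrases the translates via the $\LG$-bimodule property of $\Theta(\mu)$ (writing $x_{\delta_s,\delta_s}=l_s x_{\delta_e,\delta_e} l_{s^{-1}}$ and pulling the unitaries through) and then invokes concavity of the von Neumann entropy, whereas you work directly with the classical convolution formula $\Theta(\mu)(M_g)=M_{\mu\ast g}$ from equation~(\ref{fdotmu}) and concavity of the Shannon entropy.
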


\begin{proof} As $\Theta(\mu)\circ\h{\Theta}(\delta_e)$ is the restriction of $\Theta(\mu)$ to $\LI$, we can restrict the minimization of the von Neumann entropy to positive elements in $\LI$ of trace 1. Since these are of the form $\sum_{s\in G}\nu(s)x_{\delta_s,\delta_s}=\sum_{s\in G}\nu(s)l_sx_{\delta_e,\delta_e}l_{s^{-1}}$ for some $\nu\in M_1(G)$, and since $\Theta(\mu)$ is an $\LG$-bimodule map, we have
\begin{equation*}S\bigg(\Theta(\mu)\bigg(\sum_{s\in G}\nu(s)l_sx_{\delta_e,\delta_e}l_{s^{-1}}\bigg)\bigg)=S\bigg(\sum_{s\in G}\nu(s)l_s\Theta(\mu)(x_{\delta_e,\delta_e})l_{s^{-1}}\bigg)\geq S(\Theta(\mu)(x_{\delta_e,\delta_e}))\end{equation*}
by concavity of the von Neumann entropy. Now,
\begin{equation*}\Theta(\mu)(x_{\delta_e,\delta_e})=\sum_{s\in G}\mu(s)r_sx_{\delta_e,\delta_e}r_{s^{-1}}=\sum_{s\in G}\mu(s)x_{\delta_{s^{-1}},\delta_{s^{-1}}}=M_{\check{\mu}},\end{equation*}
where $\check{\mu}(s)=\mu(s^{-1})$ for $s\in G$ and we view $\mu$ as an element in $\LI$. Thus, $S_{\min}(\Theta(\mu)\circ\h{\Theta}(\delta_e))\geq S(M_{\check{\mu}})=H(\mu)$. The reverse inequality is obvious.\end{proof}

\begin{prop} Let $G$ be a finite group with Haar measure $h$, and let $\vphi\in\PO$. Then $S_{\min}(\h{\Theta}(\vphi)\circ\Theta(h))=S(\frac{1}{|G|}C_{\vphi})$, the entropy of the normalized correlation matrix of $\vphi$.\end{prop}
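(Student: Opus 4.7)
The plan is to rewrite the objective explicitly in terms of correlation matrices, recognize the relevant map as a Schur channel, and then invoke monotonicity of the von Neumann entropy under bistochastic channels.

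First I would use that $\Theta(h)$ is the conditional expectation onto $\LG$, so the minimum output entropy may be computed over input states $\rho$ already lying in $\LG$, on which $\Theta(h)$ acts as the identity. For finite $G$ a direct computation (positivity of $l(c)=\sum_{s\in G}c(s)l_s$ forces $c$ to be positive definite; trace one forces $c(e)=1/|G|$) identifies the states of $\LG$ with the operators $l(\vphi'/|G|)$ as $\vphi'$ ranges over $\mc{P}_1(G)$. Writing matrix entries in the basis $(\delta_t)_{t\in G}$ of $\LT$ gives $\langle l(\vphi'/|G|)\delta_u,\delta_v\rangle=\frac{1}{|G|}\vphi'(vu^{-1})$, so $l(\vphi'/|G|)=\frac{1}{|G|}C_{\vphi'}$. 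Since $\h{\Theta}(\vphi)(l_s)=\vphi(s)l_s$, the composition acts by
\[
\h{\Theta}(\vphi)\circ\Theta(h)\bigl(l(\vphi'/|G|)\bigr)=l(\vphi\vphi'/|G|)=\frac{1}{|G|}\bigl(C_\vphi\circ_S C_{\vphi'}\bigr),
\]
reducing the problem to minimizing $S\bigl(\tfrac{1}{|G|}C_\vphi\circ_S C_{\vphi'}\bigr)$ over $\vphi'\in\mc{P}_1(G)$.

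Next I would observe that the Schur map $S_{C_{\vphi'}}(x)=C_{\vphi'}\circ_S x$ is a bistochastic quantum channel on $M_{|G|}(\C)$: complete positivity is the Schur product theorem (explicitly, if $C_{\vphi'}=\sum_i v_i v_i^*$, then $S_{C_{\vphi'}}(x)=\sum_i D_{v_i}xD_{v_i}^*$ for $D_{v_i}$ the diagonal matrix with entries $v_i$), while unitality and trace preservation both follow from $C_{\vphi'}$ having unit diagonal. Applying the standard fact that unital CPTP maps are entropy non-decreasing (a direct consequence of monotonicity of relative entropy against the maximally mixed state $\frac{1}{|G|}I$) to $S_{C_{\vphi'}}$ evaluated at the state $\frac{1}{|G|}C_\vphi$ yields, for every $\vphi'\in\mc{P}_1(G)$,
\[
S\Bigl(\frac{1}{|G|}C_\vphi\circ_S C_{\vphi'}\Bigr)=S\Bigl(S_{C_{\vphi'}}\Bigl(\frac{1}{|G|}C_\vphi\Bigr)\Bigr)\geq S\Bigl(\frac{1}{|G|}C_\vphi\Bigr).
\]

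Finally the lower bound is attained at the trivial character $\vphi'\equiv 1\in\mc{P}_1(G)$: then $C_{\vphi'}$ is the all-ones matrix $J$, the Schur product reduces to $\frac{1}{|G|}C_\vphi$, and the corresponding input state $l(1/|G|)=\frac{1}{|G|}\sum_{s\in G}l_s$ is precisely the rank-one projection onto the constant functions of $\LT$ (which sits inside $\LG$, hence is fixed by $\Theta(h)$). The identification of $\h{\Theta}(\vphi)|_{\LG}$ with a Schur channel via the Gram matrix picture is the only non-bookkeeping step; once this is in place, bistochastic entropy monotonicity does the rest, so I do not anticipate a genuine obstacle.
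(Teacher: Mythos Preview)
Your proof is correct and is essentially the paper's own argument written in Schur-product notation rather than via the representation $\h{\Theta}$. The paper identifies each state in $\LG$ as $\h{\Theta}(\sigma)(l_\chi)$ for some $\sigma\in\PO$ and then uses commutativity of $\h{\Theta}(\vphi)$ and $\h{\Theta}(\sigma)$ together with bistochastic entropy monotonicity; since the paper already records that $\h{\Theta}(\sigma)=S_{C_\sigma}$, your Schur-map step and the paper's ``algebraic properties of $\h{\Theta}$'' step are the same computation.
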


\begin{proof} As $\h{\Theta}(\vphi)\circ\Theta(h)$ is the restriction of $\h{\Theta}(\vphi)$ to $\LG$, we can restrict the minimization of the von Neumann entropy to positive elements in $\LG$ of trace 1. Since $\tr(l_s)=\delta_{s,e}|G|$, every state in $\LG$ is given by $\frac{1}{|G|}\sum_{s\in G}\sigma(s)l_s$ for some $\sigma\in\PO$ (see proof of Theorem \ref{EB2}). Letting $l_{\chi}=\frac{1}{|G|}\sum_{s\in G}l_s$, it follows that
\begin{equation*}\h{\Theta}(\vphi)(l_\chi)=\frac{1}{|G|}\sum_{s\in G}\vphi(s)l_s=\frac{1}{|G|}C_{\vphi},\end{equation*}
implying $S_{\min}(\h{\Theta}(\vphi))\leq S(\frac{1}{|G|}C_{\vphi})$. Invoking the algebraic properties of $\h{\Theta}$ and the fact that bistochastic channels increase entropy, we obtain
\begin{equation*}S(\h{\Theta}(\vphi)(\h{\Theta}(\sigma)(l_{\chi}))=S(\h{\Theta}(\sigma)(\h{\Theta}(\vphi)(l_{\chi})))\geq S(\h{\Theta}(\vphi)(l_{\chi})),\end{equation*}
implying the reverse inequality.\end{proof}

Given a finite-dimensional Hilbert space $H$ and a unital $^*$-subalgebra $M\subseteq\BH$, there exists a unique trace preserving conditional expectation $E:\BH\rightarrow M$ \cite[\S XI.Theorem 4.2]{T3}. As the previous results suggest, restricting quantum channels $\Phi:\BH\rightarrow\BH$ to $M$ via composition with $E$ not only yields new examples of channels, it also simplifies entropy calculations. Moreover, if $E$ is not entanglement breaking, as is the case for $\Theta(h)$ for nonabelian $G$ (Theorem \ref{EB2}), the composed channels $\Phi\circ E$ may provide a new approach to finding explicit counter-examples to the MOE conjecture.

\section{Outlook}

Shortly after the results of \cite{NRS} concerning $M(G)$ and $\Mcb$, Junge, Neufang and Ruan unified and generalized the representation theorems to the setting of \e{locally compact quantum groups} \cite{JNR}. As mentioned earlier, interesting work has already been done concerning the resulting quantum channels: in \cite{KNR}, their fixed points were completely characterized; in an early preprint of \cite{JNR}, arising from finite-dimensional quantum groups were analyzed, and their completely bounded minimal entropy was calculated; in the first author's Master's thesis \cite{C}, a method to generate canonical Kraus operators was obtained using the theory of unitary co-representations \cite[\S4.3]{C}, and an explicit description of the channels along with their fixed points was given for the smallest nontrivial quantum group, the (8-dimensional) Kac-Paljutkin algebra \cite[\S5.3]{C}.

Nevertheless, many interesting questions remain concerning these quantum group channels, such as: classical and quantum capacities, degradability, additivity, and applications to quantum error correction and quantum cryptography. There is also the practical consideration of whether these channels can be realized in the laboratory. This is certainly true for finite abelian groups, and, moreover, the construction of the Kac-Paljutkin channels suggests that they may be viewed as ``twisted'' Schur maps, so one is tempted to believe that such channels could arise naturally as perturbations of dephasing channels -- an exciting prospect for quantum group theory.

\end{spacing}

\vspace{0.2in}

\end{document}